\newcommand{\expect}[1]{\mathbb E\left[#1\right]}
\newcommand{\expectsub}[2]{\mathbb E_{#1}\left[#2\right]}
\newtheorem{assumption}{Assumption}
\newtheorem{proposition}{Proposition}
\newtheorem{theorem}{Theorem}
\newtheorem{lemma}{Lemma}
\newtheorem{remark}{Remark}
\newtheorem{algorithm}{Algorithm}
\title{\LARGE \bf
Labor-right Protecting Dispatch of Meal Delivery Platforms
}
\author{Wentao Weng and Yang Yu 
\thanks{W. Weng is with the Department of Electrical Engineering and Computer Science,
        Massachusetts Institute of Technology
       {\tt\small wweng@mit.edu}. 
Y. Yu is with the Institute for Interdisciplinary Information Sciences,
        Tsinghua University
        {\tt\small yangyu1@mail.tsinghua.edu}.%
        This work was done when W.Weng was a senior undergraduate at Tsinghua University.
}
}
\begin{document}
\maketitle
\thispagestyle{empty}
\pagestyle{empty}

\begin{abstract}
The boom in the meal delivery industry brings growing concern about the labor rights of riders. Current dispatch policies of meal-delivery platforms focus mainly on satisfying consumers or minimizing the number of riders for cost savings. There are few discussions on improving the working conditions of riders by algorithm design. The lack of concerns on labor rights in mechanism and dispatch design has resulted in a very large time waste for riders and their risky driving. In this research, we propose a queuing-model-based framework to discuss optimal dispatch policy with the goal of labor rights protection. We apply our framework to develop an algorithm minimizing the waiting time of food delivery riders with guaranteed user experience. Our framework also allows us to manifest the value of restaurants’ data about their offline-order numbers on improving the benefits of riders.
\end{abstract}

\section{INTRODUCTION}
\subsection{Background}
Labor rights in meal delivery have become one of the top public concerns in platform economy. During the last decade, the meal-delivery industry has been flourishing due to the penetration of smart mobile phones. However, a growing number of traffic violations and accidents due to risky driving of riders have attracted people to discuss the relationship between platform algorithms and labor rights. Current statistics reveal that meal delivery riders contribute to a significant share of traffic violations. In 2017, approximately one food delivery rider was reported to be injured or killed every 2.5 days in Shanghai, China\cite{ChinaNews}. The emerging food-delivery platform also contributes to a nonnegligible fraction of traffic accidents in Korea \cite{byun2017characteristics}. In the U.S., food delivery riders are viewed as one of the most dangerous jobs\cite{DangerousJob}.

It is necessary to develop and analyze dispatch algorithms of meal-delivery platforms from the labor-right protecting perspective. However, the current algorithmic models for meal delivery are either user-centric or platform profit-driven and ignore the perspective of protecting riders from traffic risks. The absence of the perspective of labor rights of riders causes two problems. First, the current dispatch models simplify the procedure of the working flow of riders by merging riders’ picking-up order period and delivery period into one period. This oversimplification causes the current models to ignore the potential of improving rider traffic safety by managing the rider waiting time during the period of picking up orders. Consequently,  user-centric or profit-driven dispatch results in overwaiting riders. Second, the user-centric dispatch focuses mainly on timely delivery for attracting consumers while profit-driven dispatch pursues minimizing the population of riders for cost saving\cite{DriverRisk, christie2019health}. Consequently, the current dispatch algorithms limit the number of riders and continuously shorten the time window of deliveries to attract more consumers by quickly delivering foods. Both mechanisms, user-centric and profit-driven dispatch, contribute to overload of riders. The above two problems of the current food-delivery models together drive riders to overspeed.

In this research, we propose a queuing-model-based framework of designing dispatch policies from the perspective of protecting the traffic safty of riders in meal delivery platforms. In contrast to user-centric dispatch or profit-driven dispatch, in this research, we develop a framework that carefully models the working flow of riders and seeks to mitigate the stress of the delivery-time limit that forces riders to drive at excessive speeds. Our rider-centric framework enables us to analyze the pathway of protecting rider safety. According to our framework, we develop a labor-rights-protection dispatch algorithm to control the waiting time of riders in restaurants. Our algorithm can achieve the optimal dispatch strategy that minimizes the waiting time of riders with guaranteed consumer-acceptable order waiting time. We also apply our framework to explore how restaurant information can further improve the labor rights of riders with a refined platform dispatch policy.

Our rider-centric framework further allows us to comprehensively capture the impacts from all stakeholders, including restaurants, consumers, platforms, and riders themselves. We found that information about the number of restaurant offline orders is critical for managing rider waiting time for orders. Therefore, our framework decomposes the meal-delivery task into two processes: an information disclosure process and a dispatch process. The information disclosure process captures the impact of restaurants sharing their information, while the dispatch process captures the impact of the patience of consumers and decisions of platforms. We found that restaurants can help with labor rights protection by sharing their private information about the number of their offline orders. This conclusion reveals the tight connection between private information and labor rights in platform economy.
\subsection{Related work}

The growth of meal delivery platforms opens a wide range of research areas. Due to the unique process structure of meal deliveries, queuing models are frequently used to understand profit brought by meal delivery platforms, or broadly on-demand platforms \cite{bai2019coordinating,taylor2018demand,feldman2019can,chen2019food,chen2020courier}. The current user-centric or profit-driven dispatch adopts mainly a single-sided queueing model focusing on timely service of the randomly arriving orders by the least size of rider population. In particular, \cite{feldman2019can} studies the benefit of delivery platforms to restaurants by analyzing an associated queuing model and proposing new contracts. Then, in \cite{chen2019food}, the authors propose a queuing model with a pricing scheme to study interactions between platforms and customers. Their model also discusses the impact of a finite rider pool on the profit of a platform. In addition, the model of \cite{chen2020courier} analyzes the benefit of order batching via queues with batch arrivals.

From a technical perspective, our queueing model belongs to the class of two-sided queues, whose study can be dated back to the 1990s \cite{gelenbe1991queues}. In the original model, there are arrivals of positive jobs to one single queue. There are also negative jobs that can cancel one positive job waiting in the queue. However, when there is no positive job, the negative job will directly leave. The sojourn time of positive jobs has been derived \cite{harrison1993sojourn}, and the model has also been extended to more general service time \cite{harrison1996m} and arrival processes \cite{li2004map}. The stationary distribution is known to have a product form in queueing networks with negative arrivals \cite{henderson1993queueing, gelenbe1991product}. Our model, however, is different from previous studies in the sense that negative arrivals, i.e., riders, can wait in another queue for future orders. The arrival process of riders can depend on the queue length of orders instead of a time-homogeneous process.

In addition to queuing models, there are optimization and reinforcement learning studies aiming at developing online optimization strategies for efficient real-time meal deliveries. A stream of work models rider dispatch problems by optimizations with the objective of maximizing the number of timely orders or minimizing the averaged delays \cite{yildiz2019provably, reyes2018meal,joshi2020batching}. In addition, studies of meal deliveries when a customer's order can include food from multiple restaurants is studied in \cite{steever2019dynamic}. Randomness in meal delivery problems is considered in the optimization framework in \cite{ulmer2020restaurant,liu2020time}. The model in \cite{liu2020integrating} uses deep inverse reinforcement learning to learn a better dispatch policy. In addition to providing efficient dispatch, customer satisfaction is another core concern of current dispatch studies. Some recent studies discussed reliable estimation of arrival time by using supervised learning on deliveries \cite{zhu2020order,hildebrandt2020supervised}. There also exist works pursuing improving customer's satisfaction by designing reasonable range of customers to whom each restaurant can expose  \cite{ding2020delivery}.

Although the labor rights of riders have attracted sufficient social attention, there are only limited studies on the topics, most of which focus on the impacts of wage design on welfare distribution. For instance, \cite{benjaafar2020labor,wu2020two} examine how wages in platforms impact customers and laborers but have not discussed how the process of dispatching may impact labor welfare. To the best of our knowledge, there is still a lack of systematic discussion about improving the dispatch for mitigating the time-limit stress on riders, which forces riders to drive riskily during delivery orders. We also found that discussion about the impacts of information on labor rights is rare. \cite{mao2019faster,liu2020time} integrate the estimation of travel times, attitudes of customers toward late orders and familiarity of riders with roads into optimization models and show by simulations that data can improve dispatch efficiency. Nevertheless, to the best of our knowledge, there is yet no work discussing the impact of restaurant data and providing theoretical insights into how restaurant data can help improve dispatch efficiency and labor welfare.

\subsection{Labor rights of riders in food-delivery platforms}
Labor rights can be divided into two categories: 1) working benefits such as minimum wages and reasonable working hours; 2) the safety of working conditions\cite{laborright}. Most current relative discussions in the context of platform economy focus on the first economic aspect of labor rights \cite{de2015rise, benjaafar2020labor, parrott2018earnings}, while the working safety of riders is absent from the current literature about platform economy.

Here, our work provides a first step to consider the driving safety of riders in platform economy by minimizing the waiting time of riders. We focus on the waiting time of riders in restaurants for two reasons. First, minimizing the waste of time during the order-picking-up process allows riders to have sufficient time to deliver the order. Second, in contrast to user-centric or profit-driven dispatch, minimizing the rider's waiting time in restaurants can prevent the algorithm from continuously shortening the available time window of deliveries. We believe that our work could offer a new angle for the control community to think about platform economy.

\subsection{Contributions}
We summarize our three contributions below.
\begin{itemize}
\item \textbf{Framework of modeling labor-right protecting problems in meal-delivery platforms.} In contrast to the literature, we developed a model framework addressing labor rights and specifically labor safety in platform economy. As riders play the roles linking all other players on the platform, the labor-right protecting model has to comprehensively model interactions between consumers, restaurants, riders, and the platform.
\item \textbf{Optimal dispatch algorithm for labor rights protection}. Based on the above framework, we propose an efficient dispatch algorithm and rigorously prove that it minimizes the time waste of riders with satisfactory user experience.
\item \textbf{Value of restaurant data on labor rights protection.} Finally, our analysis discovers the sharp value of the private information of restaurants in improving waiting time of riders. Such a result reveals a new perspective on the role of restaurants in protecting labor welfare.
\end{itemize}
\section{Model setting}
In this section, we provide the detailed setting of an online meal delivery platform. Particularly, the environment enrolls four elements: a platform, a restaurant, arrivals of food orders and delivery riders.

\subsection{Model of a Meal-Delivery Process}
In online meal delivery service, an order needs to go through the following process until its corresponding customer can receive the food. First, immediately after being placed online (via apps or websites), the order joins into the processing queue at the restaurant. We assume that the restaurant prepares orders in the processing queue one by one in a first come first serve manner, and it can only prepare one order at one time. After an order is done, the order enters a new waiting queue and waits there until a rider picks it up. In the meantime, the platform will consistently dispatch riders to the restaurant at possibly time-varying rates. In our setting, we assume that one arriving rider will take the earliest prepared order from the waiting queue. If there are no prepared orders, riders will also wait in a queue. When one order is ready, the earliest arriving rider will take this order and leave.

We assume that orders arrive in a Poisson process with rate $1$, and its service time is an exponential distribution of rate $\mu$ independent of the arrival process. The assumption of a Poisson arrival process and exponential service time follows the vast study of restaurant service systems \cite{chen2019food, hwang2010joint, chen2020courier}.
Note that the assumption of a unit-rate arrival process is without loss of generality because we can always normalize the time scale. Denote the food processing queue as $\mathcal{Q}_1$, and the waiting queue of prepared food as $\mathcal{Q}_2$. Then, $\mathcal{Q}_1$ itself is an M/M/1 queue. When one order leaves $\mathcal{Q}_1$, it immediately joins $\mathcal{Q}_2$ to wait for riders. Let $Q_1(t), Q_2(t)$ be the number of orders in $\mathcal{Q}_1, \mathcal{Q}_2$ at time $t$, respectively. When there is no waiting order but $a$ waiting riders, we would write $Q_2(t) = -a.$ In this sense, riders are like negative arrivals that can cancel an order from the queue. Figure \ref{fig:order_procedure} provides an illustration of the whole process.

\begin{figure}[bp]
    \centering
    \includegraphics[scale=0.25]{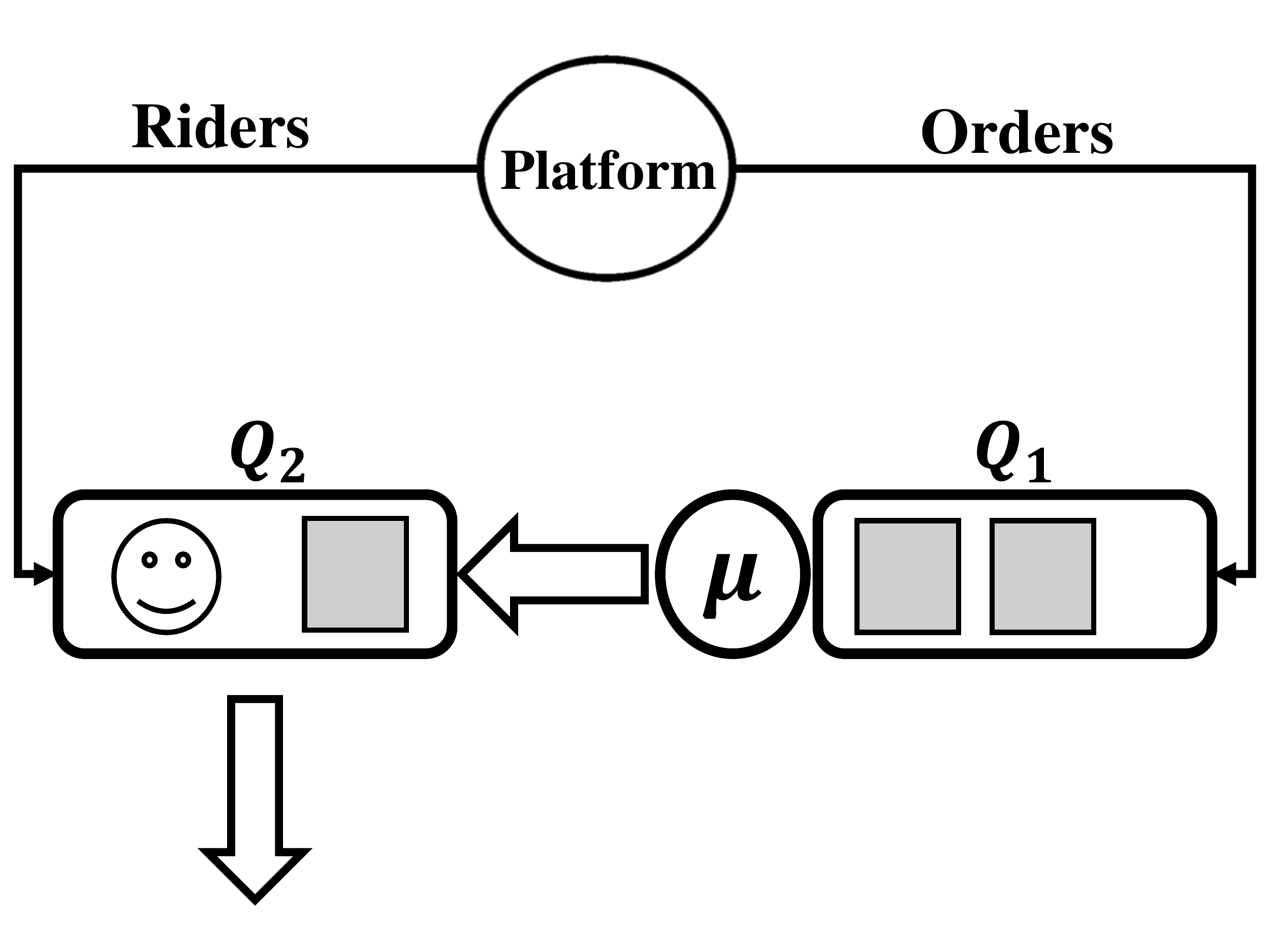}
    \caption{Illustration of Meal Delivery Services}
    \label{fig:order_procedure}
\end{figure}

\subsection{Information Disclosure Dispatch Process}
Based on knowledge of the order arrival rate, the platform must design corresponding dispatch algorithms to send riders. In addition, the platform has two sources of information, one from users and the other from the restaurant, both of which may play a role in the dispatch process. Specifically, we divide the decision process into two parts: 1) users' and restaurant's information disclosure scheme and 2) dispatch process of the platform.
\subsubsection{\textbf{Information Disclosure}}
When a user submits an order to the platform, he/she can reveal his/her patience time $T^*$, that is, how much time he/she can endure for extra order delays due to deliveries. Users' patience time is crucial for riders since in practice, riders would get low tips or even be disqualified for late deliveries \cite{doordash}. We assume that all orders will fully disclose the same fixed value $T^*$. The platform has to design algorithms satisfying this constraint. We formalize this decision problem in Section \ref{sec:labor-problem}.

From the view of the restaurant, when $Q_2(t) \geq 0$, its value is only visible to the restaurant, not the platform, since the platform has no information of whether one order is finished or not. Therefore, the restaurant could decide whether to reveal such information to the platform. To model the restaurant's willingness to share its private data with the platform, we design the following information disclosure mechanism. In particular, the restaurant sets a nonnegative integer $M$ as a threshold that is also known by the platform. For every time $t$, the restaurant maintains a signal $\theta(t)$ visible to the platform, given by
\begin{equation}
\theta(t) = \left\{
\begin{aligned}
0, & ~\text{if }Q_2(t) \leq M;\\
Q_2(t)-M, & ~\text{if }Q_2(t) > M.
\end{aligned}
\right.
\end{equation}
We allow $\theta(t) \equiv 0$, and write it as $M = \infty$. Such a screening policy is motivated by the fact that restaurants would like to hide the number of waiting prepared orders when it is small. In this way, the platform may send excessive riders, which can help reduce the waiting time of orders. The threshold $M$ measures how much information is being shared. When $M = 0$, the restaurant shares all of its private information. However, if $M = \infty$, the restaurant would provide no data to the platform.

\subsubsection{\textbf{Dispatch process}} Finally, we define the dispatch process of the platform. The dispatch policy is assumed to be a time-varying Poisson arrival process of riders whose rates change according to $\theta(t)$ provided by the restaurant. Extending the dispatch policy to more general processes may entail more technical challenges, which exceeds the focus of this paper.

In particular, the platform first sets a list of dispatch rate parameters $\vec{\lambda}=(\lambda_0,\lambda_1,\cdots)$ where $\lambda_i \leq \Lambda$ and the upper bound $\Lambda$ measures the maximal availability of riders near the restaurant. After choosing $\vec{\lambda}$, the platform will send riders to the restaurant in a Poisson process whose rate is given by $\lambda_{\theta(t)}.$ That is, conditioned on the event that the restaurant signal is equal to a value $i$, riders will arrive to the restaurant in a Poisson process of rate $\lambda_{i}$. Furthermore, if there are already $d$ riders waiting at the restaurant, i.e., $Q_2(t) = -d$, the platform will not send any riders even when it is going to do so. We call $d$ the buffer length. 
For ease of exposition, we assume a rider will arrive at the restaurant immediately when he/she is summoned by the platform. In addition, we assume that the arrival process of riders is independent of the order arrival process and service time of orders.

We note here that $\vec{\lambda}, d$ and $\Lambda$ are tightly connected to two crucial parts in today's meal deliveries: postponed dispatch \cite{ulmer2020restaurant} and delivery scope design \cite{ding2020delivery}. Specifically, in postponed dispatch, the platform will not immediately dispatch a rider to restaurants when an order arrives so that they will not arrive with orders being unprepared. Our arrival rates $\vec{\lambda}$ can thus be seen as how fast the platform will send out a rider. Furthermore, when there are already too many riders ($d$ in our model) waiting for food, the platform won't send any rider. In addition, the platform in general will divide a whole city into multiple regions and deploy different numbers of riders in each area. The number of riders in one region will then decide the magnitude of the parameter $\Lambda$.

To ensure that the platform's capacity is sufficient to serve incoming orders and that the system is stable, we impose the following assumptions on dispatch rates of riders.
\begin{assumption}\label{as:stable}
Rates $\mu, \Lambda, \lambda_0$ are larger than $1$, and $\lambda_i > 0$ for all $i > 0$. In addition, it satisfies that there exists a constant $C<\infty$, such that
$
\sum_{i=1}^{\infty}\prod_{j=1}^i \frac{1}{\lambda_i} = C.
$
\end{assumption}

Under the above assumption, the system state can be the pair of $(Q_1(t), Q_2(t))$, which forms a continuous-time Markov chain. Fig. \ref{fig:transition} illustrates a snapshot of this Markov chain. Furthermore, the system is positively recurrent and enjoys a unique stationary distribution based on Assumption \ref{as:stable} \cite{harchol2013performance}. To measure customer experience and labor rights, this paper focuses on the mean order waiting time $\expect{T_o}$ and the mean rider waiting time $\expect{T_r}$ in the system. To be more specific, the waiting time of an order is measured as the time difference between its arrival at the restaurant and its time fetched by a rider. The mean order waiting time is the time-averaged waiting time among all orders. Similarly, the waiting time of one rider is the time between his/her arrival at the restaurant and the time he/she fetches one order. The mean rider waiting time is then the time-averaged waiting time of all riders.

Note that the two expectations are determined by the platform's setting of $\vec{\lambda}, d, M$. Therefore, when we want to stress such dependence, they are written as $\expectsub{\vec{\lambda}, d, M}{T_o}$ and $ \expectsub{\vec{\lambda}, d, M}{T_r}$.
\begin{figure}
    \begin{center}
    \includegraphics[width=3.5in]{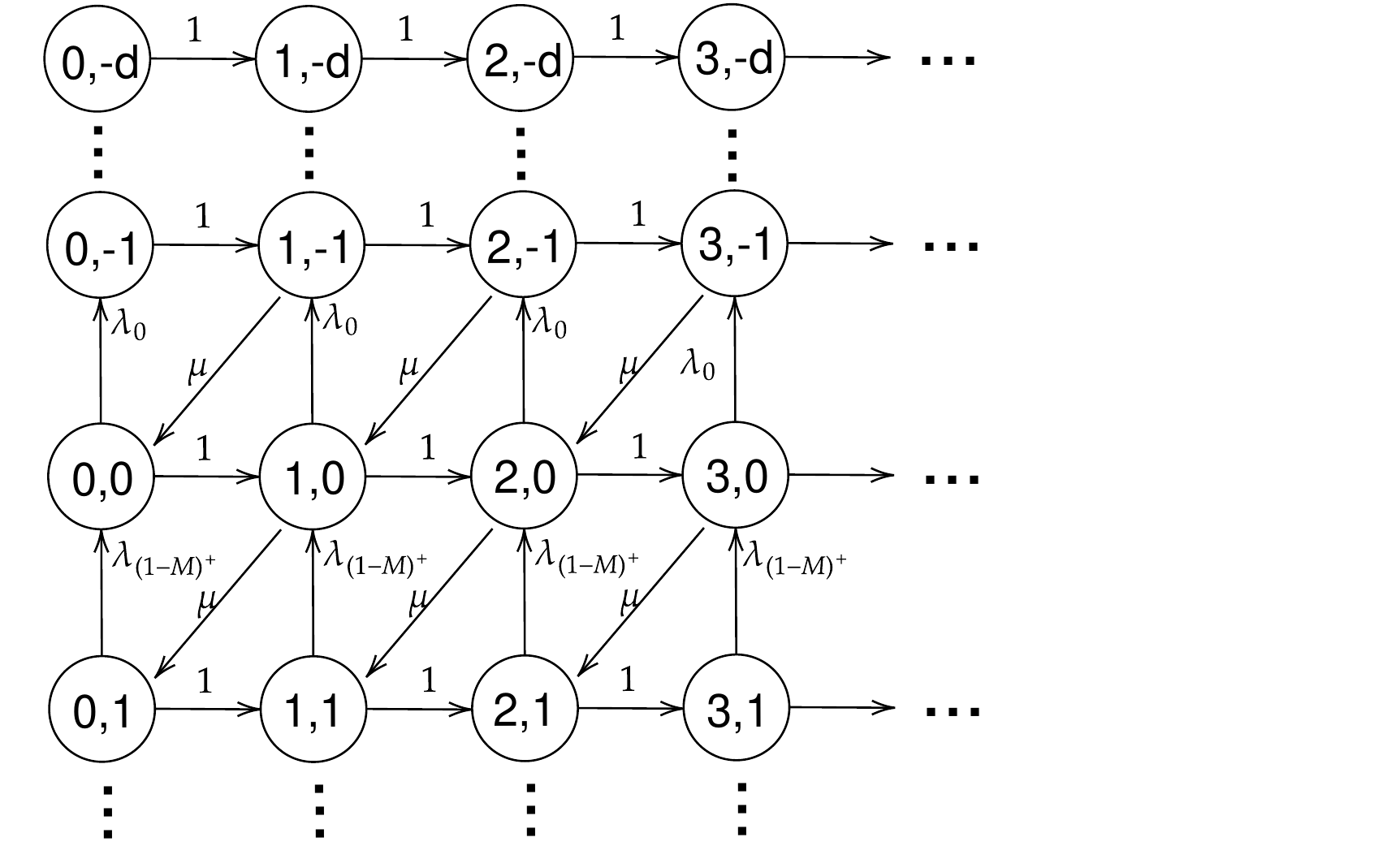}
    \caption{The Markov chain with state $(q_1,q_2)$}
    \label{fig:transition}
    \end{center}
\end{figure}

\section{Labor-Right Protecting Dispatch Problem and Analysis}\label{sec:labor-problem}
Based on the mathematical model of a meal delivery platform, we propose the labor-right protecting dispatch problem in this section. The goal is to protect laborers while guaranteeing customer experience. In particular, this section includes two parts. In the first part, we formulate the labor rights-protecting dispatch problem. In the second part, we analyze the characteristics of the model, which enables us to design algorithms for the dispatch problem in Section \ref{sec:algorithm}.
\subsection{Problem Formulation}
To formalize the experience guarantee, we first notice that the expected order waiting time is at least the mean response time in an $M/M/1$ queue of arrival rate $1$ and service rate $\mu$. Then, we know $\expect{T_o}$ is at least $\frac{1}{\mu - 1}$ \cite{harchol2013performance}.
\begin{proposition}
The averaged order waiting time $\expect{T_o}$ is at least $\frac{1}{\mu - 1}$.
\end{proposition}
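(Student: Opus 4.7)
The plan is to decompose the order waiting time $T_o$ into its two constituent phases and discard the second one to obtain a lower bound that reduces to a standard $M/M/1$ calculation. By the model definition, $T_o$ is the time between an order's arrival at the restaurant and the moment it is fetched by a rider, so it can be written as $T_o = T_o^{(1)} + T_o^{(2)}$, where $T_o^{(1)}$ is the sojourn time in the processing queue $\mathcal{Q}_1$ (waiting plus service) and $T_o^{(2)}$ is the time spent in the pickup queue $\mathcal{Q}_2$ before being fetched. Since $T_o^{(2)} \geq 0$ almost surely, taking expectations gives $\expect{T_o} \geq \expect{T_o^{(1)}}$.

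Next I would observe that $\mathcal{Q}_1$ evolves completely independently of the dispatch policy $(\vec{\lambda}, d, M)$: orders arrive as a Poisson process of rate $1$, are served one at a time in FCFS order with i.i.d.\ exponential service times of rate $\mu$, and the departures from $\mathcal{Q}_1$ (which feed $\mathcal{Q}_2$) have no feedback on $\mathcal{Q}_1$'s own dynamics. Hence $\mathcal{Q}_1$ is, in isolation, an $M/M/1$ queue with arrival rate $1$ and service rate $\mu$, and by Assumption \ref{as:stable} we have $\mu > 1$ so the queue is stable.

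The final step is to invoke the classical formula for the mean sojourn time in a stable $M/M/1$ queue, which gives $\expect{T_o^{(1)}} = \frac{1}{\mu - 1}$ in stationarity (see, e.g., the reference \cite{harchol2013performance} already cited in the excerpt). Combining with the inequality from the first step yields $\expect{T_o} \geq \frac{1}{\mu - 1}$, as claimed.

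There is really no serious obstacle here; the only point worth stating carefully is the independence of $\mathcal{Q}_1$'s dynamics from the downstream dispatch — i.e., that lower-bounding by the $\mathcal{Q}_1$-sojourn is legitimate uniformly over all admissible $(\vec{\lambda}, d, M)$ — so that the $M/M/1$ formula applies without modification regardless of how riders are sent.
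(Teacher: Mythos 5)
Your proof is correct and is essentially the same argument the paper gives (the paper only sketches it in one line: $\expect{T_o}$ is at least the mean response time of the $M/M/1$ queue $\mathcal{Q}_1$, which equals $\frac{1}{\mu-1}$). Your version just makes explicit the decomposition $T_o = T_o^{(1)} + T_o^{(2)}$ and the independence of $\mathcal{Q}_1$ from the dispatch policy, which is a fair amount of useful detail but not a different route.
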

Based on this lower bound on expected order waiting time, we can formalize the meaning of patient time $T^*$ in the sense that we require $\expect{T_o} \leq \frac{1}{\mu - 1} + T^*$ since $\expect{T_o} - \frac{1}{\mu - 1}$ measures the extra delays brought by deliveries. Then, after the restaurant reveals its threshold $M$, the platform needs to find the best $\vec{\lambda},d$ to minimize the averaged rider waiting time while satisfying requirements of users.

Based on the above discussion, we propose the following optimization framework for the labor-right protecting dispatch problem.

\begin{equation}\label{eq:optimization}
\begin{aligned}
& \underset{\vec{\lambda},d}{\text{minimize}}
& & \expectsub{\vec{\lambda},d,M}{\text{rider waiting time }T_r} \\
& \text{subject to}
& & \expectsub{\vec{\lambda},d,M}{\text{order waiting time }T_o} \\
& & &\leq \frac{1}{\mu - 1} + \text{user patience time }T^* .
\end{aligned}
\end{equation}
Note that our optimization problem also belongs to the vast domain of constrained Markov decision processes (CMDPs) \cite{altman1999constrained}. The goal of CMDP is to obtain the optimal state-dependent policy of a certain objective while satisfying the given constraints. The optimal policy can be complicated and even randomized. However, in the context of our paper, we not only want to solve the optimization problem (\ref{eq:optimization}) but also seek to explicitly understand the trade-off between the objective "rider waiting time" and the constraint "order waiting time" and how the decisions and the trade-off could change in response to the volume of information provided by the restaurant. Therefore, the policy space in our optimization problems is kept as simple as possible to discuss the above information-related decisions and does not involve the general space of randomized policies. We believe a new framework is needed for such discussions and thus leave it as a future study.

\subsection{Model Analysis}
To obtain a deeper understanding of the problem, we first study the stationary distribution of the system in the general setting where the restaurant allows a threshold $M$, and the platform sets the associated dispatch rate $\vec{\lambda}$ and the buffer length $d$. Based on the distribution, we observe a decoupling phenomenon between $\mathcal{Q}_1$ and $\mathcal{Q}_2$. Finally, we restrict our scope to the case of $M=\infty$ and obtain close-form formulas for $\expect{T_o}$ and $\expect{T_r}$.
\subsubsection{\textbf{Stationary Distribution of the General Model}}
Recall that the state of the system is given by $(Q_1(t), Q_2(t))$, namely, the number of orders waiting for preparation and the number of prepared orders waiting for riders. If $Q_2(t) < 0$, it is the negative of waiting riders. Let $(\mathbf{Q}_1,\mathbf{Q}_2)$ be the corresponding random variables distributed with the stationary distribution of $(Q_1(t), Q_2(t))$. Let $\pi_{q_1,q_2}$ be the stationary distribution of the event $(\mathbf{Q_1} = q_1, \mathbf{Q_2} = q_2)$. Fix the threshold $M$, the dispatch rates $\vec{\lambda} = (\lambda_0,\lambda_1,\cdots)$, and the buffer level $d$, and assume they satisfy the assumption \ref{as:stable}. The state space of the system would be $\mathbb{N} \times \left(\{-d, d+1,\cdots, 1\}\cup \mathbb{N}\right)$.

Define $\rho = \frac{1}{\lambda_0}$. The following lemma shows that$\pi_{q_1,q_2}$ exhibits a product-form property similar to the property in queueing networks with negative arrivals \cite{henderson1993queueing}. The proof of this lemma can be found in the appendix.
\begin{lemma}\label{lemma:distribution}
It holds that for $q_1 \geq 0, q_2 \geq -d$,
\[
\pi_{q_1,q_2} = \left\{
\begin{aligned}
&C_1\left(1-\frac{1}{\mu}\right)\left(\frac{1}{\mu}\right)^{q_1}\rho^{q_2},~\text{if }q_2 \leq M; \\
&C_1\left(1-\frac{1}{\mu}\right)\left(\frac{1}{\mu}\right)^{q_1}\rho^{M}\prod_{i=1}^{q_2-M}\frac{1}{\lambda_i},~\text{otherwise,}
\end{aligned}
\right.
\]
where
\begin{equation}
C_1 = \frac{1-\rho}{\rho^{-d}-\rho^{M+1}+(1-\rho)\rho^M\sum_{i=1}^{\infty}\prod_{j=1}^i \frac{1}{\lambda_i}}.
\end{equation}
\end{lemma}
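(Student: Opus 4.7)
My plan is to establish the formula by guessing a product form $\pi_{q_1,q_2} = c\,(1/\mu)^{q_1}\,g(q_2)$ and verifying the global balance equations for the Markov chain $(Q_1(t),Q_2(t))$. The ansatz is motivated by two observations: $\mathcal{Q}_1$ viewed in isolation is an M/M/1 queue of load $1/\mu$, and the rider-arrival rate $\lambda_{\max(q_2-M,0)}$ depends only on $q_2$, so that once $\mathcal{Q}_1$ is integrated out, the marginal of $q_2$ should behave like a birth-death chain with birth rate $1$ (the effective Poisson rate at which prepared orders flow into $\mathcal{Q}_2$) and position-dependent death rate $\lambda_{\max(q_2-M,0)}$. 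This is essentially a Jackson/Burke-type decomposition.

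I would first enumerate the transitions out of a generic state $(q_1,q_2)$---order arrivals at rate $1$, food preparations at rate $\mu\mathbf{1}\{q_1>0\}$, rider arrivals at rate $\lambda_{\max(q_2-M,0)}\mathbf{1}\{q_2>-d\}$---and substitute the ansatz into the corresponding global balance equation. After dividing through by $c\,(1/\mu)^{q_1}$ and using $\mu\cdot(1/\mu)^{q_1+1}=(1/\mu)^{q_1}$, the outflow term $\mu g(q_2)\mathbf{1}\{q_1>0\}$ coming from food preparation will cancel exactly against the inflow term $\mu g(q_2)\mathbf{1}\{q_1\geq 1\}$ coming from an order arrival at $(q_1-1,q_2)$. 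What remains is an equation purely in $g$:
\[
g(q_2)\bigl[1+\lambda_{\max(q_2-M,0)}\mathbf{1}\{q_2>-d\}\bigr] = g(q_2-1)\,\mathbf{1}\{q_2>-d\} + \lambda_{\max(q_2+1-M,0)}\,g(q_2+1),
\]
which is precisely the balance equation of a birth-death chain on $\{-d,-d+1,\ldots\}$ with birth rate $1$ and death rate $\lambda_{\max(q_2-M,0)}$.

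Next, I would solve this birth-death chain by detailed balance, giving the recursion $g(q_2+1)=g(q_2)/\lambda_{\max(q_2+1-M,0)}$. Iterating from $q_2=-d$ with $g(-d)=1$ produces $g(q_2)=\rho^{q_2+d}$ for $-d\le q_2\le M$ and $g(q_2)=\rho^{M+d}\prod_{i=1}^{q_2-M}1/\lambda_i$ for $q_2>M$; absorbing the extra factor $\rho^d$ into an overall normalization constant identified with $C_1(1-1/\mu)$ recovers the stated form. Finally, I would pin down $C_1$ by enforcing $\sum_{q_1,q_2}\pi_{q_1,q_2}=1$: the $q_1$-sum contributes $\mu/(\mu-1)$, the $q_2$-sum over $\{-d,\dots,M\}$ is the finite geometric series $(\rho^{-d}-\rho^{M+1})/(1-\rho)$, and the tail over $q_2>M$ equals $\rho^M\cdot C$ by Assumption~\ref{as:stable}. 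Rearranging will yield the stated closed form for $C_1$.

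The main bookkeeping obstacle lies at the boundaries $q_1=0$ and $q_2=-d$, where some transitions are absent. A short case check will verify that the balance at $(0,q_2)$ reduces to the same equation in $g$ as at interior states (both $\mu$-containing terms vanish simultaneously), and that the balance at $q_2=-d$ collapses to $g(-d)=\lambda_0\,g(-d+1)$, which is already implied by the detailed-balance recursion evaluated at $q_2=-d$. Positive recurrence under Assumption~\ref{as:stable} then ensures that the verified distribution is the unique stationary one.
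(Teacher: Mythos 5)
Your proposal is correct and follows essentially the same route as the paper: both verify the guessed product form against the global balance equations of the chain $(Q_1(t),Q_2(t))$ and then fix the constant by normalization. The only difference is organizational---you cancel the $q_1$-dependence first, reduce to a one-dimensional birth--death chain in $q_2$, and invoke detailed balance, which treats the cases $q_2<M$, $q_2=M$, $q_2>M$ and the boundaries uniformly, whereas the paper checks the two-dimensional balance case by case for interior states; your handling of the boundary states is if anything more explicit (modulo the trivial caveat that the death rate out of state $-d+1$ is $\lambda_{\max(-d+1-M,0)}$, which equals $\lambda_0$ only when $-d+1\le M$).
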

\subsubsection{Queue Decoupling}\label{sec:queue-decoupling}

Observe that by Lemma \ref{lemma:distribution}, we can indeed decouple $\mathcal{Q}_2$ from $\mathcal{Q}_1$. If we consider only the distribution of events $\mathbf{Q}_2 = q$ for $q \geq -d$, the stationary distribution of $\mathcal{Q}_2$ is equivalent to the stationary distribution of the following Markov chain. In the equivalent Markov chain, the set of the state is $\{-d,\cdots,-1\} \cup \mathbb{N}$. For each state $q \geq -d$, it transitions to state $q + 1$ with rate $1$ and to $q - 1$ with rate $\lambda_{\max(q-M,0)}$ if $q > -d$. The stationary distribution of this Markov chain, denoted by $\nu_q$, is exactly given by
\begin{equation}
\nu_q = \left\{
\begin{aligned}
&C_1\rho^{q_2},~\text{if }q_2 \leq M; \\
&C_1\rho^{M}\prod_{i=1}^{q_2-M}\frac{1}{\lambda_i},~\text{otherwise,}
\end{aligned}
\right.
\end{equation}
where $\rho,C_1$ are the same as those in Lemma \ref{lemma:distribution}.

In addition, let $\mathbf{Q}^*$ be the random variable corresponding to the state in this Markov chain. Then, by equivalence, we have the following property connecting $\expect{T_o},\expect{T_r}$ with $\mathbf{Q}^*$. Due to space limit, we omit its proof.
\begin{lemma}\label{lemma:equivalence}
It holds that
\begin{equation}
\begin{aligned}
\expect{T_o} &= \frac{1}{\mu - 1} + \expect{\max(\mathbf{Q}^*,0)} \\
\expect{T_r} &= -\expect{\min(\mathbf{Q}^*,0)}.
\end{aligned}
\end{equation}
\end{lemma}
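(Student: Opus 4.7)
The plan is to prove both identities by applying Little's law to the two components of the system, after first decomposing the order waiting time along the two queues.

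For the order waiting time, I would write $T_o = T_o^{(1)} + T_o^{(2)}$, where $T_o^{(1)}$ is the time an order spends in the processing queue $\mathcal{Q}_1$ and $T_o^{(2)}$ is the time it spends in the prepared-order queue $\mathcal{Q}_2$ waiting for a rider. Since $\mathcal{Q}_1$ is an M/M/1 queue with arrival rate $1$ and service rate $\mu$ (the dynamics of $\mathcal{Q}_1$ do not depend on $\mathcal{Q}_2$), standard M/M/1 theory gives $\expect{T_o^{(1)}} = \frac{1}{\mu-1}$. For $\expect{T_o^{(2)}}$, I would apply Little's law to the sub-queue of prepared orders: the mean number of prepared orders present is $\expect{\max(\mathbf{Q}_2,0)}$ in the original chain, which by the decoupling established in Section \ref{sec:queue-decoupling} equals $\expect{\max(\mathbf{Q}^*,0)}$; the long-run arrival rate into $\mathcal{Q}_2$ equals the throughput of $\mathcal{Q}_1$, which in the stable regime is exactly $1$. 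Dividing yields $\expect{T_o^{(2)}} = \expect{\max(\mathbf{Q}^*,0)}$, and summing gives the first identity.

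For the rider waiting time I would again use Little's law, this time applied to the pool of idle riders. The mean number of waiting riders is $\expect{\max(-\mathbf{Q}^*,0)} = -\expect{\min(\mathbf{Q}^*,0)}$, again via the decoupled chain. The effective rate at which riders enter the waiting pool equals the rate at which they depart with an order, which in steady state is the order-completion rate $1$ (every rider who is not blocked by the buffer constraint ultimately picks up exactly one order, matching the unit order throughput). Little's law then gives $\expect{T_r} = -\expect{\min(\mathbf{Q}^*,0)}$, which is the second identity.

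The main delicate point is justifying that the effective arrival rate used in Little's law is exactly $1$ on both sides. On the order side this is immediate from stability of $\mathcal{Q}_1$ and the fact that rider dispatch never affects preparation; on the rider side one must argue that, although riders are Poisson-dispatched at state-dependent rate $\lambda_{\theta(t)}$ and may be suppressed when $Q_2(t)=-d$, the throughput-balance identity forces the \emph{accepted} rider rate to equal the order-completion rate, which is $1$. This balance, together with the fact that $T_r$ is defined as the time-averaged waiting time over riders that actually join the system (so the denominator in Little's law is the accepted rate, not the nominal Poisson rate), is what makes the clean formula hold. Once this is verified the rest of the argument is just bookkeeping, and the product form of Lemma \ref{lemma:distribution} is only invoked indirectly through the decoupling.
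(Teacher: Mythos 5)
Your proof is correct, and it matches the paper's methodology: the paper omits a proof of this lemma for space, but its proof of Lemma \ref{lemma:meanValue} carries out exactly the same Little's-law computation (including the verification that the accepted rider rate $\hat{\lambda}$ equals $1$) in the special case $M=\infty$. Your argument is simply the general-$M$ version of that calculation, with the decoupling of Section \ref{sec:queue-decoupling} supplying the identification of $\mathbf{Q}_2$ with $\mathbf{Q}^*$, so nothing further is needed.
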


\subsubsection{\textbf{Analysis for $M = \infty$}}
Restricting our scope to $M = \infty$, based on Lemma \ref{lemma:distribution}, we can obtain expressions of $\expect{T_o}$ and $\expect{T_r}$ via Little's Law\cite{harchol2013performance}. Formally, we have this lemma.
\begin{lemma}\label{lemma:meanValue}
For fixed $\lambda_0 > 1$ and $d$, let $\rho = \frac{1}{\lambda_0}$. It satisfies that
\begin{subequations}
\begin{equation}
\expectsub{\lambda_0,d}{T_o} = \frac{1}{\mu - 1} + \frac{\rho^{d+1}}{1-\rho},
\end{equation}
\begin{equation}
\expectsub{\lambda_0,d}{T_r} = d - \frac{\rho-\rho^{d+1}}{1-\rho}.
\end{equation}
\end{subequations}
\end{lemma}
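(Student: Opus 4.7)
The plan is to specialize Lemma \ref{lemma:distribution} to $M=\infty$ and then invoke the decoupling formulas of Lemma \ref{lemma:equivalence}. Evaluating $C_1$ at $M=\infty$, the terms $\rho^{M+1}$ and $(1-\rho)\rho^M\sum_{i=1}^{\infty}\prod_{j=1}^{i}1/\lambda_j$ both vanish since $\rho=1/\lambda_0<1$ and the infinite sum is finite by Assumption \ref{as:stable}. Hence $C_1=(1-\rho)\rho^d$, and because the constraint $q_2\le M$ is vacuous, the decoupled stationary law reads
\[
\nu_q=(1-\rho)\rho^{d+q},\qquad q\ge -d,
\]
i.e.\ $\mathbf{Q}^*+d$ is geometric with parameter $1-\rho$, matching the stationary queue length of an M/M/1 system with utilization $\rho$.

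Given this distribution, the two expectations become routine series computations. For orders,
\[
\expect{\max(\mathbf{Q}^*,0)}=\sum_{q=1}^{\infty} q(1-\rho)\rho^{d+q}=(1-\rho)\rho^d\cdot\frac{\rho}{(1-\rho)^2}=\frac{\rho^{d+1}}{1-\rho},
\]
and adding $\frac{1}{\mu-1}$ from Lemma \ref{lemma:equivalence} yields the first formula. For riders, the cleanest route is the identity $-\expect{\min(\mathbf{Q}^*,0)}=\expect{\max(\mathbf{Q}^*,0)}-\expect{\mathbf{Q}^*}$. Since $\expect{\mathbf{Q}^*+d}=\rho/(1-\rho)$ is the usual geometric mean, $\expect{\mathbf{Q}^*}=\rho/(1-\rho)-d$, so
\[
-\expect{\min(\mathbf{Q}^*,0)}=\frac{\rho^{d+1}}{1-\rho}-\frac{\rho}{1-\rho}+d=d-\frac{\rho-\rho^{d+1}}{1-\rho}.
\]
A purely direct alternative is to evaluate the finite sum $\sum_{q=-d}^{-1}(-q)(1-\rho)\rho^{d+q}$ using $\sum_{k=0}^{d-1}\rho^k=(1-\rho^d)/(1-\rho)$ and the standard identity for $\sum k\rho^k$; both routes agree.

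The lemma's reference to Little's Law suggests the same derivation in queueing-theoretic terms: each order spends mean time $\frac{1}{\mu-1}$ in $\mathcal{Q}_1$ (the classical M/M/1 response time) plus $\expect{\max(\mathbf{Q}^*,0)}$ in $\mathcal{Q}_2$ when Little's Law is applied there with order arrival rate $1$; meanwhile riders enter the waiting buffer at effective rate $\lambda_0\cdot\Pr(\mathbf{Q}^*>-d)=\lambda_0\rho=1$ in stationarity (matching the order departure rate by conservation), so Little's Law converts the mean number of waiting riders $-\expect{\min(\mathbf{Q}^*,0)}$ directly into $\expect{T_r}$. I do not foresee a real obstacle; the only step needing care is checking that the $M\to\infty$ limit collapses $C_1$ to $(1-\rho)\rho^d$, which is precisely where Assumption \ref{as:stable} is used to keep the infinite product-sum bounded so that $\rho^M$ times it tends to zero.
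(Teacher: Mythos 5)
Your proposal is correct and follows essentially the same route as the paper: specialize the product-form stationary distribution to $M=\infty$ (where $C_1=(1-\rho)\rho^d$), evaluate the resulting geometric series, and convert queue lengths to waiting times via Little's Law with the effective rider arrival rate $\lambda_0\rho=1$. The only cosmetic difference is that you route the computation through the decoupled chain of Lemma \ref{lemma:equivalence} while the paper sums directly over the joint distribution $\pi_{q_1,q_2}$; the series manipulations and the final identities are identical.
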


\section{Labor-right protecting algorithm} \label{sec:algorithm}
Since it is generally nontrivial to solve (\ref{eq:optimization}), in the first part of this section, we start our analysis by assuming that there is no restaurant information, i.e., $M = \infty$. Under this assumption, we propose Algorithm \ref{algo:optimal-parameter} that provides the procedure to find the optimal $\lambda_0, d$. Second, when there are restaurant data, we design Algorithm \ref{algo:policy-improve} to improve policies based on new information.

\subsection{Algorithm and its Optimality when No Restaurant Data}
When there is no restaurant information, the platform needs only to seek the optimal design of $\lambda_0, d$. The closed-form expressions of $\expect{T_o}$ and $\expect{T_r}$ shown in the Lemma \ref{lemma:meanValue} provides the basis of the following algorithm to solve (\ref{eq:optimization}).

\begin{algorithm}[Optimal Dispatch Algorithm]\label{algo:optimal-parameter}
The algorithm takes $T^*$ and $\Lambda$ as inputs and outputs two parameters $\lambda_0, d$ as follows.
\begin{itemize}
\item If $\frac{\Lambda}{1-\Lambda} \leq T$, simply returns $d = 0, \lambda_0 = \Lambda$;
\item Otherwise, for every $\rho \in [0,1)$, define $D(\rho)$ as the smallest nonnegative integer such that $\frac{\rho^{D(\rho)+1}}{1-\rho} \leq T^*$; and for every nonnegative integer $n$, define $\gamma(n)$ as the maximum real number in $(0,1)$ such that $\frac{\gamma(n)^{n+1}}{1-\gamma(n)} \leq T^*$.
\item Let $d_1 = D(\frac{1}{\Lambda}).$ The algorithm returns $d_1, \frac{1}{\gamma(d_1)}$.
\end{itemize}
\end{algorithm}

The next theorem supports the optimality of Algorithm \ref{algo:optimal-parameter}.
\begin{theorem}\label{thm:optimal-algo}
Under the two parameters $\lambda_0, d$ found by Algorithm \ref{algo:optimal-parameter}, and assuming $M = \infty$, it satisfies $\expect{T_o} = \frac{1}{\mu - 1} + T^*$, and $\expect{T_r}$ is minimized.
\end{theorem}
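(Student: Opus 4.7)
The plan is to invoke Lemma \ref{lemma:meanValue} to rewrite (\ref{eq:optimization}) under $M=\infty$ as a clean two-variable optimization and then collapse it to one variable. Setting $\rho := 1/\lambda_0 \in [1/\Lambda, 1)$ and $d \in \mathbb{N}$, the constraint becomes $\rho^{d+1}/(1-\rho) \leq T^*$ and the objective becomes
\[
\expect{T_r} \;=\; d - \frac{\rho - \rho^{d+1}}{1-\rho} \;=\; \sum_{k=1}^{d}(1-\rho^k),
\]
which is strictly increasing in $d$ and strictly decreasing in $\rho$.

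Next, I would check that $f(\rho) := \rho^{d+1}/(1-\rho)$ is strictly increasing on $(0,1)$ (its derivative equals $\rho^{d}(d+1-d\rho)/(1-\rho)^2>0$) and blows up at $1$, so for each fixed $d$ the constraint restricts $\rho$ to the interval $[1/\Lambda,\gamma(d)]$; this interval is nonempty precisely when $\gamma(d)\geq 1/\Lambda$, i.e.\ $d\geq d_1=D(1/\Lambda)$. Because $\expect{T_r}$ is decreasing in $\rho$, the optimal $\rho$ for each feasible $d$ is $\gamma(d)$, which also makes the constraint tight---this gives the asserted equality $\expect{T_o}=\frac{1}{\mu-1}+T^*$. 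Writing the reduced objective as
\[
G(d) \;:=\; d - \frac{\gamma(d)}{1-\gamma(d)} + T^*
\]
(using $\gamma(d)^{d+1}/(1-\gamma(d))=T^*$), everything reduces to showing $G(d_1)\leq G(d)$ for all integer $d\geq d_1$.

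The main step---and the one I expect to be the real work---is proving that $G$ is strictly increasing on $\{d_1,d_1+1,\ldots\}$. Put $a=\gamma(d)$ and $b=\gamma(d+1)$. Plugging $\rho=a$ into the constraint for level $d+1$ gives $a^{d+2}/(1-a)=a\cdot T^*<T^*$, so strict monotonicity of $f$ in $\rho$ yields $b>a$. The two defining relations $a^{d+1}/(1-a)=T^*=b^{d+2}/(1-b)$ rearrange to the key identity
\[
(a/b)^{d+1} \;=\; \frac{b(1-a)}{1-b}.
\]
Since $a<b<1$ the left-hand side is $<1$, so $b(1-a)<1-b$, equivalently $b(2-a)<1$, equivalently $b-a<(1-a)(1-b)$. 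A direct calculation from the definition of $G$ gives
\[
G(d+1)-G(d) \;=\; 1 - \frac{b-a}{(1-a)(1-b)},
\]
which is therefore strictly positive. Hence $G$ is strictly increasing, and its minimum is attained at $d=d_1$ with $\rho=\gamma(d_1)$---exactly the algorithm's output.

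Finally, when the algorithm's first branch applies (so $d=0$ is already feasible at $\rho=1/\Lambda$), one has $\expect{T_r}=\sum_{k=1}^{0}(1-\rho^k)=0$, which is trivially the minimum, and no further argument is needed. The hardest part of the whole proof is isolating the algebraic identity for $(a/b)^{d+1}$; once that is in hand, the monotonicity of $G$ follows from an almost mechanical rearrangement.
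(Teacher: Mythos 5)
Your proof is correct and follows the same overall skeleton as the paper's: reduce via Lemma \ref{lemma:meanValue} to minimizing $d-\frac{\rho-\rho^{d+1}}{1-\rho}$ subject to $\frac{\rho^{d+1}}{1-\rho}\leq T^*$, note monotonicity in $\rho$ and $d$, saturate the constraint at $\rho=\gamma(d)$ for each feasible $d$, and then show the resulting one-variable function of $d$ is increasing so that the smallest feasible $d=D(1/\Lambda)$ wins. Where you genuinely diverge is in the key monotonicity step. The paper (Lemma \ref{lemma:monotonicity}) relaxes $d$ to a continuous variable by defining $x(\rho)$ with $\frac{\rho^{x(\rho)+1}}{1-\rho}=T^*$, rewrites the objective as $G(\rho)=\frac{\ln T^*+\ln(1-\rho)}{\ln\rho}-\frac{\rho}{1-\rho}-1$, and establishes $G'(\rho)>0$ through a chain of derivative estimates ending in the inequality $-\rho+\rho\ln\rho+1>0$. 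You instead compare consecutive integers directly: with $a=\gamma(d)$, $b=\gamma(d+1)$, the two tight constraints give $(a/b)^{d+1}=\frac{b(1-a)}{1-b}<1$, hence $b-a<(1-a)(1-b)$, and the telescoping identity $G(d+1)-G(d)=1-\frac{b-a}{(1-a)(1-b)}$ finishes it. Your route is more elementary (no logarithmic derivatives, no auxiliary functions $G_1,G_2$) and arguably more transparent, since the rewriting $\expect{T_r}=\sum_{k=1}^{d}(1-\rho^k)$ makes both monotonicities immediate; the paper's continuous relaxation buys a statement about a real-variable function $g(\rho)$ that is slightly stronger than what is needed but costs a longer calculus argument. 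One small point worth recording either way: in the first branch of Algorithm \ref{algo:optimal-parameter} the constraint is generally slack, so the claimed equality $\expect{T_o}=\frac{1}{\mu-1}+T^*$ holds only in the second branch; this is an imprecision in the theorem statement shared by both your write-up and the paper's proof, not a gap in your argument.
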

Let's look at a proof sketch of Theorem \ref{thm:optimal-algo}. A formal proof is provided in the appendix. The goal is to minimize $\expect{T_r}$ while satisfying the delay constraint in (\ref{eq:optimization}). By Lemma \ref{lemma:meanValue}, we want to find $\lambda_0, d$ such that $d-\frac{\rho-\rho^{d+1}}{1-\rho}$ is minimized and $\frac{\rho^{d+1}}{1-\rho} \leq T^*$. The intuition of Algorithm \ref{algo:optimal-parameter} is to choose the smallest $\lambda_0$ satisfying the constraint in (\ref{eq:optimization}). Note that $\expect{T_r}$ is a decreasing function of $\rho$ when $d$ is fixed. Therefore, after finding the smallest $d$, we would increase the value of $\rho$ (or equivalently decrease $\lambda_0$) until the constraint in (\ref{eq:optimization}) is violated. 
\subsection{Policy Improvement using Restaurant Information}
In this section, we explore how to improve platform policies using restaurant data. Particularly, the restaurant provides more information to the platform by reducing its threshold from $M$ to $m$, where $m < M$. Our results highlight that restaurant data are of particular importance to help the platform make better decisions.

Formally, let us assume that the platform has a policy given by when the threshold is $M$. Then, algorithm \ref{algo:policy-improve} outputs a new policy for a given threshold $m, m < M$ that is arguably better than the original policy. Algorithm \ref{algo:policy-improve} is given below.

\begin{algorithm}[Policy Improvement with Restaurant's Data]\label{algo:policy-improve}
The algorithm takes $\vec{\lambda}=(\lambda_0,\cdots),d,M,m$ and $\Lambda$ as inputs and outputs the dispatch rates and buffer level, $\vec{\tau}=(\tau_0,\cdots), d_1$, of the new policy as follows. Define $\rho = 1/\lambda_0$.
\begin{itemize}
\item Set $C = \frac{\rho-\rho^{M-m+1}}{1-\rho}+\rho^{M-m}\sum_{i=M+1}^{\infty}\prod_{j=1}^{i-M} \frac{1}{\lambda_j}.$
\item Set $\tau_0 = \lambda_0, \tau_1 = \frac{1}{(1-\frac{1}{\Lambda})C}$, and $\tau_i = \Lambda$ for all $i > 1$.
\item Let $d_1 = d$. Output $(\vec{\tau}, d_1)$.
\end{itemize}
\end{algorithm}

Our next result shows that using Algorithm \ref{algo:policy-improve}, the platform can always find a policy with better averaged order waiting time and the same averaged rider waiting time based on any policy of a larger threshold.
\begin{theorem}\label{thm:data-value}
Suppose that the platform has a policy with parameters $\vec{\lambda}=(\lambda_0,\cdots),d$ and the restaurant's threshold is $M$, $M > 0$. If $M$ reduces to $m$, $m < M$, then using Algorithm \ref{algo:policy-improve}, the platform can construct a new set of dispatch rates $\vec{\tau}$, such that
\begin{equation}\label{eq:policy-improve}
\expectsub{\vec{\tau},d,m}{T_o} \leq \expectsub{\vec{\lambda},d,M}{T_o}, ~\expectsub{\vec{\tau},d,m}{T_r} = \expectsub{\vec{\lambda},d,M}{T_r},
\end{equation}
where the inequality becomes equal only if
\[
\rho=\frac{\Lambda-1}{\Lambda}\left(\frac{\rho-\rho^{M-m+1}}{1-\rho}+\rho^{M-m}\sum_{i=M+1}^{\infty}\prod_{j=1}^{i-M} \frac{1}{\lambda_j}\right).
\]
\end{theorem}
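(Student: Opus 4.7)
The plan is to use the queue decoupling of Section~\ref{sec:queue-decoupling} to reduce everything to comparing the stationary distributions of the one-dimensional chain $\mathbf{Q}^*$ under the two policies. Let $\nu_q$ (with normalizer $C_1$) denote the distribution under $(\vec{\lambda},d,M)$ and $\nu'_q$ (with normalizer $C_1'$) the one under $(\vec{\tau},d,m)$. Since $\tau_0=\lambda_0$, both chains share the same $\rho=1/\lambda_0$. The first step is a direct algebraic check that the specific $C$ in Algorithm~\ref{algo:policy-improve} forces $C_1'=C_1$. Because $\tau_i=\Lambda$ for $i\geq 2$, the geometric tail yields
\[
\sum_{i=1}^{\infty}\prod_{j=1}^i\frac{1}{\tau_j}=\frac{1/\tau_1}{1-1/\Lambda}=C,
\]
and substituting into the normalizer from Lemma~\ref{lemma:distribution} while expanding $C$ causes the $\rho^{m+1}$ terms to cancel, recovering exactly $1/C_1$. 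Granted $C_1=C_1'$, the rider-waiting-time equality is immediate: by Lemma~\ref{lemma:equivalence}, $\expect{T_r}$ depends only on $\nu_q$ for $-d\leq q\leq -1$, and on this range both distributions equal $C_1\rho^q$.

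For the order-waiting-time inequality, Lemma~\ref{lemma:equivalence} reduces the comparison to showing $\sum_{q>m}q\,\nu'_q\leq\sum_{q>m}q\,\nu_q$. Since $\nu_q=\nu'_q$ for $0\leq q\leq m$ and both distributions normalize to $1$, the two tails on $\{m+1,m+2,\ldots\}$ carry equal total mass. The main obstacle is then the mean comparison between these tails, which I plan to resolve by a single-crossing argument on the ratios. The per-step ratios are: for $q\geq m+1$, $\nu'_{q+1}/\nu'_q=1/\Lambda$, whereas $\nu_{q+1}/\nu_q$ equals $\rho\geq 1/\Lambda$ (for $m\leq q<M$) or $1/\lambda_{q-M+1}\geq 1/\Lambda$ (for $q\geq M$). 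Hence $r_q:=\nu'_q/\nu_q$ is weakly decreasing on $q>m$. Combined with $\sum_{q>m}(\nu_q-\nu'_q)=0$, the sign of $\nu_q-\nu'_q=\nu_q(1-r_q)$ switches from $\leq 0$ to $\geq 0$ at some threshold $q^*$, and the rearrangement
\[
\sum_{q>m}q(\nu_q-\nu'_q)=\sum_{q>m}(q-q^*)(\nu_q-\nu'_q)\geq 0
\]
follows because every summand is a product of two factors of matching sign.

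For the equality characterization, I trace strictness through the rearrangement. Each summand $(q-q^*)(\nu_q-\nu'_q)$ is nonnegative, so equality forces every summand to vanish; at $q=m+1$ this means either $m+1=q^*$ or $\nu_{m+1}=\nu'_{m+1}$. The former possibility, combined with the sign pattern and the zero-sum constraint, would force $\nu_q=\nu'_q$ for all $q>q^*$ and thereby contradict $\nu_{m+1}\neq\nu'_{m+1}$. Hence equality implies $\nu'_{m+1}=\nu_{m+1}$, i.e., $C_1\rho^m(1-1/\Lambda)C=C_1\rho^{m+1}$, which simplifies to $\rho=\frac{\Lambda-1}{\Lambda}C$; unpacking the definition of $C$ recovers exactly the condition displayed in the theorem.
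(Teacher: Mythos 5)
Your proof is correct, and its second half takes a genuinely different route from the paper's. The first half --- checking that the two normalizers coincide, so that the stationary distributions agree on all states $q\leq m$ and in particular on the rider states $-d\leq q\leq -1$ --- is the same computation the paper performs by ``coupling'' the two chains at state $m$ and verifying $\pi_A(m)=\pi_B(m)$; your cancellation of the $\rho^{m+1}$ terms is exactly what makes that coupling work. For the order-waiting-time inequality, however, the paper introduces an auxiliary optimization problem (minimize $\sum_i i\,x_i$ over all tail sequences with the prescribed total mass and successive ratios at least $1/\Lambda$) and argues by a greedy exchange that the new policy's tail is the minimizer; you instead compare the two specific tails directly via a single-crossing argument: the likelihood ratio $\nu'_q/\nu_q$ is weakly decreasing because the new chain drains at the maximal rate $\Lambda$ above level $m+1$ while the old chain's drain rates never exceed $\Lambda$, and together with the equal-mass constraint this yields stochastic dominance of the old tail over the new one. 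Your route is more self-contained and arguably tighter: the paper's exchange step (``decrease $x_{i+1}$ and increase $x_i$'') is asserted without checking that the perturbation preserves the downstream ratio constraints, whereas your rearrangement identity needs no such care. You also obtain the equality characterization essentially for free --- every summand $(q-q^*)(\nu_q-\nu'_q)$ must vanish, which forces $\nu_{m+1}=\nu'_{m+1}$ and unpacks to the displayed condition on $\rho$ --- while the paper's appendix proof does not address the equality case at all. What the paper's formulation buys in exchange is a slightly stronger conclusion: the constructed $\vec{\tau}$ is optimal among all rate vectors preserving $\expect{T_r}$, not merely an improvement over $\vec{\lambda}$.
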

\begin{remark}
Note that if $M = \infty$ in Theorem \ref{thm:data-value}, the inequality in (\ref{eq:policy-improve}) becomes equal if and only if $\lambda_0 = \Lambda$, that is, the platform is already dispatching riders at its full capacity.
\end{remark}
The proof of Theorem \ref{thm:data-value} can be found in the appendix.
\section{Obligation of different stakeholders}
In this section, we discuss obligations in protecting the labor rights of the three stakeholders: the platform, users, and the restaurant.

First, for restaurant information, Theorem \ref{thm:data-value} tells us that whenever there are data from the restaurant side, the platform can improve the experience of customers while maintaining the same averaged rider waiting time. In this sense, restaurant data help to improve labor rights by making the labor more efficient. We will further quantify the effect of queueing data of the restaurant in Section \ref{sec:simulation}.

Second, we desire to understand how the customer's requirement of experience and the platform's ability to recruit riders may affect the averaged rider waiting time. For ease of exposition, we assume that $M = \infty$ in this part. Furthermore, suppose that the platform is still trying to solve the labor-right protecting dispatch problem introduced in (\ref{eq:optimization}).

In this case, we show that there is a fundamental lower bound of $\expect{T_r}$ given that $T^*$ is low. The proof is deferred to the appendix.
\begin{theorem}[Dilemma between customers and riders]\label{thm:tradeoff}
Suppose that $\hat{\rho}\colon=\frac{1}{\Lambda} \geq 0.3$ and $\frac{\hat{\rho}^4}{1-\hat{\rho}} \geq T^*$ and $M = \infty$. Then, it holds
$
\expect{T_r} \geq \frac{\ln(4/3)}{4}\frac{1}{1-\hat{\rho}}-T^*,
$
for all feasible $d$ and $\lambda_0$ satisfying the constraint in (\ref{eq:optimization}).
\end{theorem}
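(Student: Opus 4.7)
The plan is to use the closed-form expressions from Lemma~\ref{lemma:meanValue} to reduce the claim to a one-variable convex inequality. Writing $\rho = 1/\lambda_0 \in [\hat\rho,1)$, the delay constraint in \eqref{eq:optimization} becomes $\rho^{d+1}/(1-\rho) \leq T^*$, while the objective telescopes as
\[
\mathbb{E}[T_r] \;=\; d - \frac{\rho - \rho^{d+1}}{1-\rho} \;=\; \sum_{k=1}^{d}(1-\rho^k),
\]
a form that enables a clean term-by-term lower bound. Note first that if $T^* \geq \ln(4/3)/(4(1-\hat\rho))$ the target right-hand side is nonpositive, so we may assume $T^* < \ln(4/3)/(4(1-\hat\rho))$ throughout.

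The core step is to choose $K = \lceil \log_{1/\rho}(4/3) \rceil$, the smallest integer with $\rho^K \leq 3/4$, so that $1 - \rho^k \geq 1/4$ for every $k \geq K$. If $d < K$, then from $K \leq \log_{1/\rho}(4/3) + 1$ one obtains $\rho^{d+1} \geq 3\rho/4$, and the delay constraint forces $T^* \geq 3\rho/(4(1-\rho)) \geq 3\hat\rho/(4(1-\hat\rho))$, which exceeds $\ln(4/3)/(4(1-\hat\rho))$ since $3\hat\rho > \ln(4/3)$ whenever $\hat\rho \geq 0.3$; this contradicts the working assumption, so in fact $d \geq K$. Combining the constraint-derived bound $d+1 \geq \log_{1/\rho}(1/(T^*(1-\rho)))$ with $K \leq \log_{1/\rho}(4/3) + 1$ then yields
\[
\mathbb{E}[T_r] \;\geq\; \frac{d-K+1}{4} \;\geq\; \frac{1}{4}\left[\log_{1/\rho}\!\left(\frac{3}{4T^*(1-\rho)}\right) - 1\right].
\]

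To finish, I apply the elementary inequality $\ln(1/\rho) \leq (1-\rho)/\rho$, which gives $\log_{1/\rho}(x) \geq \rho\ln(x)/(1-\rho)$ whenever $x \geq 1$. The resulting bound factors as $\tfrac{1}{4}\tfrac{\rho}{1-\rho}\ln\!\bigl(\tfrac{3}{4T^*(1-\rho)}\bigr) - \tfrac{1}{4}$, a product of two positive, $\rho$-increasing quantities (the logarithm is positive because $T^*(1-\rho) \leq T^*(1-\hat\rho) < \ln(4/3)/4 < 3/4$), so it is minimized over $\rho \in [\hat\rho, 1)$ at $\rho = \hat\rho$. After clearing denominators and setting $\beta := 4T^*(1-\hat\rho) > 0$, the target inequality reduces to showing
\[
F(\beta) \;:=\; \hat\rho \ln(3/\beta) + \beta - (1-\hat\rho) - \ln(4/3) \;\geq\; 0.
\]
Since $F''(\beta) = \hat\rho/\beta^2 > 0$, $F$ is strictly convex in $\beta$ and is bounded below by its value at the unconstrained minimizer $\beta = \hat\rho$, namely $G(\hat\rho) := \hat\rho\ln(3/\hat\rho) + 2\hat\rho - 1 - \ln(4/3)$. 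A derivative check gives $G'(\hat\rho) = \ln(3/\hat\rho) + 1 > 0$ on $(0,1)$, so $G$ is increasing, and direct numerical evaluation yields $G(0.3) \approx 0.003 > 0$.

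The main obstacle is precisely this last verification, which is essentially tight: $G$ has a root near $\hat\rho \approx 0.299$, so the threshold $\hat\rho \geq 0.3$ in the hypothesis is very nearly sharp for the chosen constant $\ln(4/3)/4$, and this tightness is what dictates the precise numerical constants in the theorem. The remaining hypothesis $T^* \leq \hat\rho^4/(1-\hat\rho)$ simply delineates the economically meaningful small-$T^*$ regime in which this fundamental lower bound on rider waiting time has real content.
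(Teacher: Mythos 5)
Your proof is correct, and it takes a genuinely different route from the paper's. The paper works with the \emph{sum} of the two waiting times: by Lemma~\ref{lemma:meanValue}, $\expect{T_r}+\expect{T_o}-\frac{1}{\mu-1}=d+\frac{2\rho^{d+1}-\rho}{1-\rho}$, and it lower-bounds this quantity by $\frac{\ln(4/3)}{4(1-\rho)}$ via a case split on whether $\rho^{d+1}\geq 3/4$, the harder case resting on a calculus lemma showing $\frac{2x-1}{1-x^{1/d}}\geq-\frac{d}{2}$ on $(0,1)$; this is exactly where the hypothesis $T^*\leq\hat\rho^4/(1-\hat\rho)$ is used, since it forces $d\geq 3$. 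You instead rewrite $\expect{T_r}=\sum_{k=1}^d(1-\rho^k)$, count the terms with $\rho^k\leq 3/4$ (each contributing at least $1/4$), convert the delay constraint into a lower bound on $d$, and close with the convexity of $F(\beta)$ and the endpoint check $G(0.3)>0$. Each approach buys something: the paper's argument yields the conceptually cleaner statement that the \emph{sum} of rider and excess order delay is bounded below by a capacity-dependent constant, and its constants have visible slack (its two cases give $\frac{1}{2(1-\rho)}$ and $\frac{\ln(4/3)}{4(1-\rho)}$ respectively); your argument is more elementary — no optimization of an auxiliary function like $v(x)$ is needed — and, notably, never invokes the hypothesis $\frac{\hat\rho^4}{1-\hat\rho}\geq T^*$, so it proves a slightly stronger theorem. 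The price is that your chain of estimates is numerically tight: as you observe, $G$ vanishes near $\hat\rho\approx 0.299$, so your route has essentially no room to spare at the stated threshold $\hat\rho\geq 0.3$, whereas the paper's threshold is dictated by the looser inequality $-\ln\rho\leq 2(1-\rho)$. All the individual steps you use check out, including the contradiction ruling out $d<K$ (which needs only $3\hat\rho>\ln(4/3)$), the bound $\log_{1/\rho}(x)\geq\frac{\rho\ln x}{1-\rho}$ for $x\geq 1$, the monotonicity in $\rho$ of the resulting lower bound, and the reduction to $F(\beta)\geq F(\hat\rho)=G(\hat\rho)\geq G(0.3)>0$.
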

This result shows that the sum of $T^*$ and $\expect{T_r}$ is lower bounded by a value determined by the platform capacity $\Lambda$. As a result, when $\Lambda$ is fixed, no matter how the platform acts, it is impossible for both $T^*$ and $\expect{T_r}$ to become negligible simultaneously. Nevertheless, the lower bound decreases when $T^*$ or $\Lambda$ increases, thus implying that if customers allow a longer delivery time or if the platform is able to recruit more riders, then laborers can enjoy a lower average waiting time.
\section{Numerical Case} \label{sec:simulation}
In this section, we aim to illustrate the importance of the labor rights protection problem, the dilemma between customer experience and rider waiting time, and the value of restaurant information using simulations. Particularly, we assume that $\lambda_0 = 1$, and $\mu = \Lambda = 1.5$. It is a reasonable assumption because in practice, the restaurant and the platform usually only recruit nearly sufficient numbers of riders to reduce cost.
\subsection{Importance of labor rights protection and Trade-offs}
To study the impact of the labor rights protection problem, we consider three settings of patient time, $T^*=0.01, 0.05, 0.1$. In addition, we assume that there is no restaurant information, so $M = \infty$. Then, for $\lambda_0 \in (1,\Lambda)$, we plot the curve of $\expectsub{\lambda_0,d}{T_r}$, where $d$ is the minimum integer such that $\expectsub{\lambda_0,d}{T_r} - \frac{1}{\mu - 1} \leq T^*$. The result is shown in Fig. \ref{fig:right-protect}.
\begin{figure}[!hbtp]
    \centering
    \includegraphics[scale=0.3]{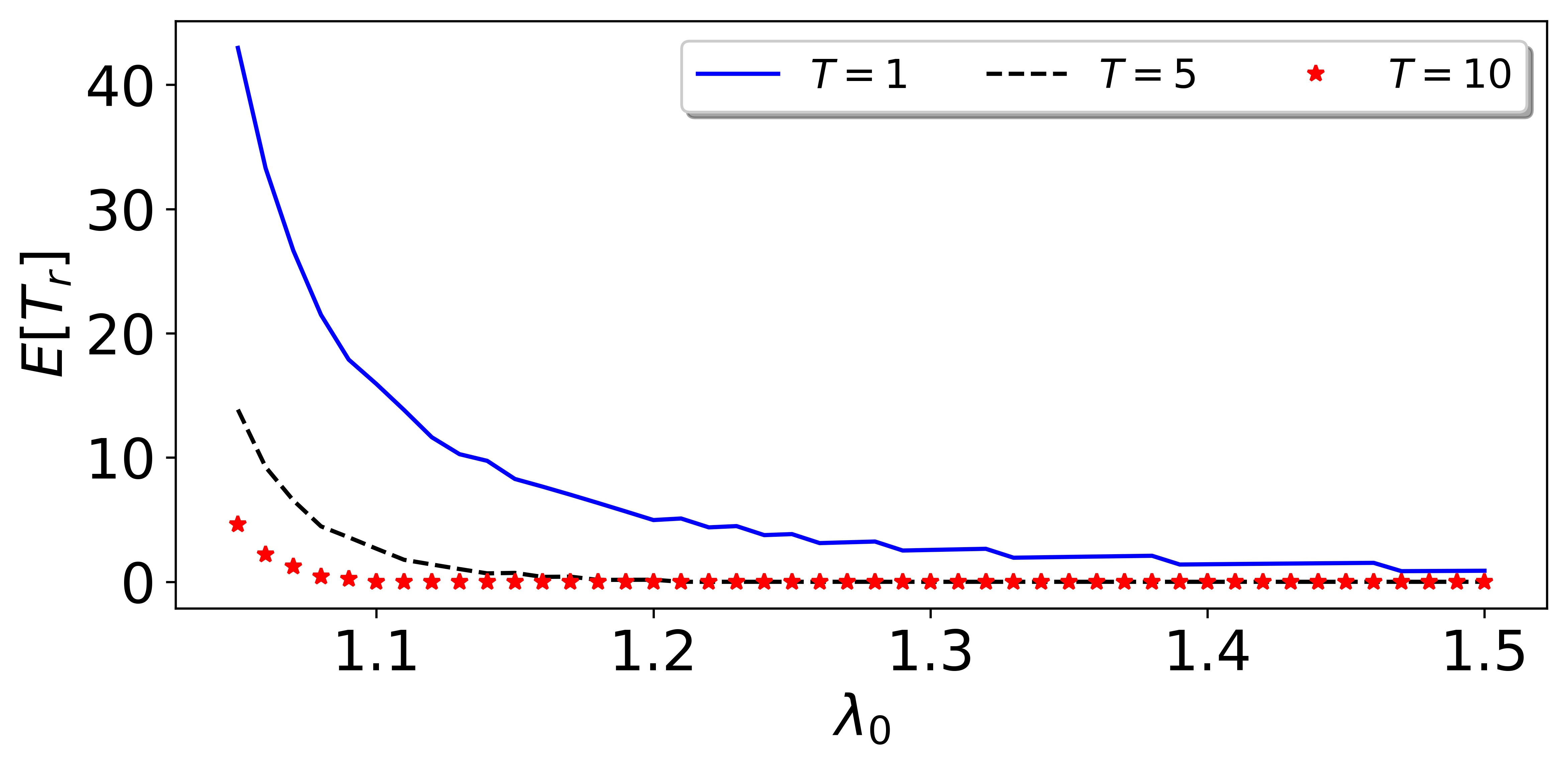}
    \caption{Averaged Rider waiting time under Different Dispatch Rates Given that Customer Experience is satisfied}
    \label{fig:right-protect}
\end{figure}
As we can see, for a fixed $T^*$, although all the patience time is satisfied, the averaged rider waiting time varies greatly. A clever setting of $\lambda_0$ can reduce the waiting time by a factor of $4$. In addition, when $T^*$ increases, the averaged rider waiting time decreases. It characterizes the trade-off between customer experience and labor rights.
\subsection{Value of Restaurant's Data}
To highlight the value of the data from the restaurant, we assume that $M$ could be $0,10,\infty$. In this case, we are able to differentiate different levels of restaurant information. For the platform's policy, we fix $d = 0$ and then allow $\lambda_0$ to vary from $1$ to $\Lambda$ to be the policy for $M = \infty$. Then, for other $M$, we employ Algorithm \ref{algo:policy-improve} to generate new policies. We plot the value of $\expectsub{\vec{\lambda},d,M}{T_o}$ after the policy improvement as shown in Fig. \ref{fig:policy-improve}.
\begin{figure}[!hbtp]
    \centering
    \includegraphics[scale=0.3]{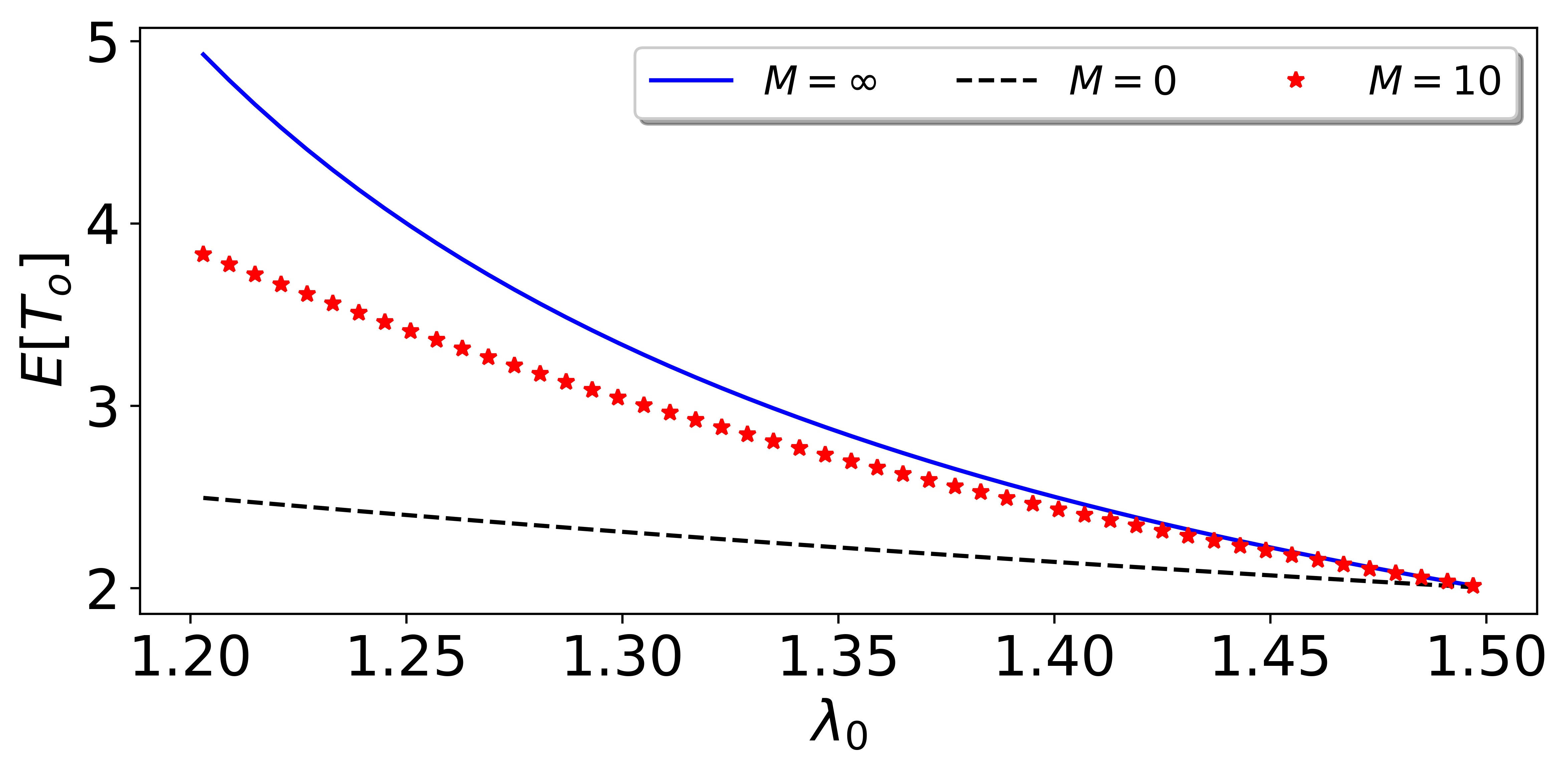}
    \caption{Averaged Order Waiting Time After Using Restaurant Data}
    \label{fig:policy-improve}
\end{figure}
As we can tell from Fig. \ref{fig:policy-improve}, restaurant data and algorithm \ref{algo:policy-improve} can indeed greatly improve the customer experience, decreasing $\expect{T_o}$ by a factor of two when $M = 0$ and $\lambda_0 = 1.2$. Specifically, we can observe that the improvement is greater when $\lambda_0$ or $M$ is smaller.
\section{CONCLUSIONS}
In this paper, we formulated a queueing model for meal delivery platforms that incorporated four key stakeholders: the platform, the restaurant, riders, and customers. In particular, we proposed the labor rights protecting problem, whose goal is to reduce waiting time of riders while maintaining satisfactory customer experience. When there is no restaurant information, we designed an efficient algorithm to find the optimal dispatch parameters for the platform. When the restaurant is willing to share its private data, we further characterized a policy improvement method that takes any policy as an input and returns a refined policy based on restaurant data. The refined policy provably improves customer experience, while labor rights did not get hurt. Our results also indicated that there was a trade-off between customer experience and labor rights in the sense that they cannot be satisfied simultaneously. Our simulations confirmed the theoretical results. Specifically, they revealed that an optimal labor-rights protecting algorithm could dramatically reduce the waiting time of riders, and restaurant data could considerably reduce the order waiting time, especially when the restaurant would like to share more information.


\bibliographystyle{IEEEtran}
\bibliography{references}

\appendix
\subsection{Proof of Lemma \ref{lemma:distribution}}
\begin{proof}
Notice that $\sum_{q_1=0}^{\infty}\sum_{q_2=-d}^{\infty}\pi_{q_1,q_2} = 1$. We can then prove the result by checking that $\pi_{q_1,q_2}$ satisfies the balance equation of the Markov chain because the steady state distribution is the unique solution to the balance equation given that it is normalized\cite{harchol2013performance}. For any $x$, denote $x^+=\max(0,x)$. In the considered Markov chain, for a state $(q_1,q_2)$, its transitions are given as follows:
\[
(q_1,q_2) \rightarrow \left \{
\begin{aligned}
(q_1+1,q_2) &,~\text{of rate }1; \\
(q_1-1,q_2+1) &,~\text{of rate }\mu~\text{if }q_1 > 0; \\
(q_1,q_2-1) &,~\text{of rate }\lambda_{(q_2-M)^+}~\text{if }q_2 > -d.
\end{aligned}
\right.
\]
For any two state $Q,P$, denote the rate from $Q$ to $P$ by $r_{Q,P}$. Fix a state $Q=(q_1,q_2)$. Here we only check the balance equation for state $(q_1,q_2)$ satisfying $q_1 > 0,q_2 > -d$. The same strategy can verify other cases similarly. The balance equation of this state is that 
\[
\text{inflow}\colon = \sum_{P} \pi_{P}r_{P,Q} = \sum_{P} \pi_Q r_{Q,P} =\colon \text{outflow}.
\]
We need to consider three cases. First, suppose that $q_2 < M$. Then by the form of $\pi$, it holds
\[
\begin{aligned}
\text{inflow}&=\pi_{q_1-1,q_2}+\pi_{q_1,q_2+1}\lambda_0+\pi_{q_1+1,q_2-1}\mu \\
&= C\left(\frac{1}{\mu^{q_1-1}\lambda_0^{q_2}}+\frac{1}{\mu^{q_1}\lambda_0^{q_2}}+\frac{1}{\mu^{q_1}\lambda_0^{q_2-1}}\right) \\
&= \pi_{q_1,q_2}\left(1+\lambda_0+\mu\right)=\text{outflow}.
\end{aligned}
\]
Then if $q_2 = M$, it satisfies that 
\[
\begin{aligned}
\text{inflow}&=\pi_{q_1-1,q_2}+\pi_{q_1,q_2+1}\lambda_1+\pi_{q_1+1,q_2-1}\mu \\
&= C\left(\frac{1}{\mu^{q_1-1}\lambda_0^{q_2}}+\frac{1}{\mu^{q_1}\lambda_0^{q_2}}+\frac{1}{\mu^{q_1}\lambda_0^{q_2-1}}\right) \\
&= \pi_{q_1,q_2}\left(1+\lambda_0+\mu\right)=\text{outflow}.
\end{aligned}
\]
Finally, for $q_2 > M$, 
\[
\begin{aligned}
&\mspace{23mu}\text{inflow} \\
&=\pi_{q_1-1,q_2}+\pi_{q_1,q_2+1}\lambda_{q_2+1-M}+\pi_{q_1+1,q_2-1}\mu \\
&= C\left(\frac{1}{\mu^{q_1-1}\lambda_0^{q_2}}+\frac{1}{\mu^{q_1}\lambda_0^{M}}\left(\prod_{i=1}^{q_2-M}\frac{1}{\lambda_i}+\prod_{i=1}^{q_2-M-1}\frac{1}{\lambda_i}\right)\right) \\
&= \pi_{q_1,q_2}\left(1+\lambda_{q_2-M}+\mu\right)=\text{outflow},
\end{aligned}
\]
which completes the proof.
\end{proof}
\subsection{Proof of Lemma \ref{lemma:meanValue}}
\begin{proof}
Define $\expect{N_o}$ be the expected number of unsent orders. By Little's Law, $\expect{T_o} = \expect{N_o}$ because orders' arrival rate is $1$. It holds that 
\[
\expect{N_o} = \sum_{q_1=0}^{\infty}\sum_{q_2=0}^{\infty} (q_1+q_2)\pi_{q_1,q_2}.
\]
Simplifying Lemma \ref{lemma:distribution} by setting $M = \infty$. We have $\pi_{q_1,q_2} = (1-\rho)\rho^d(1-\frac{1}{\mu})\frac{1}{\mu^q_1}\rho^{q_2}$ for $q_1 \geq 0,q_2 \geq -d$.
Then 
\[
\begin{aligned}
\expect{N_o} &= \sum_{q_1,q_2=0}^{\infty} q_1 \pi_{q_1,q_2} + \sum_{q_1,q_2=0}^{\infty} q_2 \pi_{q_1,q_2} \\
&= \frac{1}{\mu - 1} + \sum_{q_2=0}^{\infty} (1-\rho)\rho^{d+j}j\\
&= \frac{1}{\mu - 1} + \frac{\rho^{d+1}}{1-\rho}.
\end{aligned}
\]
On the other hand, define $\expect{N_r}$ be the expected waiting riders. Again, by Little's Law, $\expect{N_r} = \hat{\lambda}\expect{T_r}$, where $\hat{\lambda}$ is the "true" dispatch rate of riders. That is, 
\[
\hat{\lambda} = \lambda_0\left(1 - \sum_{q_1=0}^{\infty}\pi_{q_1,-d}\right)=\lambda_0\rho = 1.
\]
Then for $\expect{N_r}$, it holds
\[
\begin{aligned}
\expect{N_r} = \sum_{q_1=0}^{\infty}\sum_{q_2=-d}^{-1} -q_2 \pi_{q_1,q_2} &= (1-\rho)\rho^d\sum_{j=1}^d j\rho^j \\
&= d - \frac{1 - \rho^d}{\lambda_0 - 1},
\end{aligned}
\]
and thus $\expect{T_r} = d - \frac{1 - \rho^d}{\lambda_0 - 1}$.
\end{proof}
\subsection{Formal Proof of Theorem \ref{thm:optimal-algo}}

We first have the following lemma demonstrating the optimality of choosing smallest feasible $d$. 

Recall the function $D(\rho)$ and the function $\gamma(n)$ from Algorithm \ref{algo:optimal-parameter} defined for a fixed $T^*$. Let $d_l =max(1,D(\frac{1}{\Lambda}))$. For every $d \geq d_l$, define $h(d) = \expectsub{\frac{1}{\gamma(d)},d}{T_r}$. 
\begin{lemma}\label{lemma:monotonicity}
It holds that $h(d)$ is a strictly increasing function when $d \geq d_l$.
\end{lemma}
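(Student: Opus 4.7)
The plan is to use Lemma \ref{lemma:meanValue} together with the fact that the constraint defining $\gamma(d)$ is binding. Since $\rho \mapsto \rho^{d+1}/(1-\rho)$ is continuous, strictly increasing on $(0,1)$, vanishes at $0$, and blows up as $\rho \to 1^-$, the maximum $\gamma(d)$ satisfies $\gamma(d)^{d+1}/(1-\gamma(d)) = T^*$. Writing $\rho_d := \gamma(d)$ and substituting $\lambda_0 = 1/\rho_d$ into Lemma \ref{lemma:meanValue} collapses the expression for $h(d)$ to the clean form
\[
h(d) \;=\; d - \frac{\rho_d}{1-\rho_d} + T^*,
\]
because the term $\rho_d^{d+1}/(1-\rho_d)$ is replaced by $T^*$.

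Next I would reduce strict monotonicity to a concrete inequality between $\rho_d$ and $\rho_{d+1}$. Subtracting consecutive values, the $T^*$ cancels and a short calculation gives
\[
h(d+1)-h(d) \;=\; 1 - \frac{\rho_{d+1}-\rho_d}{(1-\rho_d)(1-\rho_{d+1})}.
\]
Since $\rho \mapsto \rho^{d+2}/(1-\rho)$ lies strictly below $\rho \mapsto \rho^{d+1}/(1-\rho)$ on $(0,1)$, we have $\rho_{d+1} > \rho_d$, so the fraction above is positive. The lemma is therefore equivalent to $\rho_{d+1}-\rho_d < (1-\rho_d)(1-\rho_{d+1})$, which after expanding and isolating $\rho_{d+1}$ simplifies to the cleaner bound $\rho_{d+1} < 1/(2-\rho_d)$.

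The core step is to verify this last inequality directly from the defining equations of $\rho_d$ and $\rho_{d+1}$. Let $F_n(\rho) := \rho^{n+1}/(1-\rho)$, so $F_n$ is strictly increasing on $(0,1)$ and $F_{d+1}(\rho_{d+1}) = T^* = F_d(\rho_d)$. It therefore suffices to show $F_{d+1}(1/(2-\rho_d)) > F_d(\rho_d)$, that is, that the candidate upper bound overshoots the target. A direct computation gives
\[
F_{d+1}\!\left(\tfrac{1}{2-\rho_d}\right) \;=\; \frac{1}{(2-\rho_d)^{d+1}(1-\rho_d)},
\]
so the required inequality, after cancelling the common factor $1/(1-\rho_d)$ and taking positive $(d+1)$-th roots, reduces to $1 > \rho_d(2-\rho_d)$, i.e.\ $(1-\rho_d)^2 > 0$. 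This is strict since $\rho_d \in (0,1)$, so $h(d+1) > h(d)$.

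The main obstacle, to the extent there is one, is locating the right candidate bound $1/(2-\rho_d)$ for $\rho_{d+1}$; once this is in hand the algebra collapses neatly to $(1-\rho_d)^2 > 0$. I do not expect the cut-off $d \geq d_l$ to play any analytic role: it only encodes feasibility of $\lambda_0 = 1/\rho_d$ against the capacity $\Lambda$ in Algorithm \ref{algo:optimal-parameter}, and the same substitution would in fact give strict monotonicity of $h$ on all of $\{d \geq 0\}$.
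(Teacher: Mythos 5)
Your proof is correct, but it takes a genuinely different route from the paper's. The paper extends the discrete parameter to a continuous one: it defines $x(\rho)$ by $\rho^{x(\rho)+1}/(1-\rho)=T^*$, shows $x(\gamma(d))=d$, rewrites $h(d)=G(\gamma(d))$ for $G(\rho)=\frac{\ln T^*+\ln(1-\rho)}{\ln\rho}-\frac{\rho}{1-\rho}-1$, and then proves $G'>0$ on $(0,1)$ by a chain of derivative estimates (bounding $G'$ below by an auxiliary function $G_1$ and reducing to the sign of $-\rho+\rho\ln\rho+1$). You instead stay discrete: both arguments hinge on the same key observation that the constraint defining $\gamma(d)$ binds, so that $h(d)=d-\frac{\rho_d}{1-\rho_d}+T^*$, but you then compare consecutive terms directly, reduce $h(d+1)>h(d)$ to $\rho_{d+1}<1/(2-\rho_d)$, and verify this by evaluating the strictly increasing map $F_{d+1}$ at the candidate point, where everything collapses to $(1-\rho_d)^2>0$. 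Your version is more elementary (no calculus, no auxiliary functions) and arguably cleaner; the paper's version buys monotonicity along the entire continuous curve $\rho\mapsto G(\rho)$, which is more than the lemma needs. Your closing remark is also accurate: the cutoff $d\ge d_l$ only guarantees $1/\gamma(d)\le\Lambda$ (feasibility against the capacity constraint) and plays no role in the monotonicity argument itself.
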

\begin{proof}
Let $\rho_l$ be the smallest real number such that $\frac{\rho_l^{2}}{1-\rho_l} \geq T^*$. And if $\rho_l < \frac{1}{\Lambda}$, set it to be $\frac{1}{\Lambda}$. We can then see that $d_l = D(\rho_l)$. Now for every $\rho \geq \rho_l$, define $x(\rho)$ such that $\frac{\rho^{x(\rho)+1}}{1-\rho} = T^*$. Then $x(\rho)+1 = \frac{\ln T^* + \ln(1-\rho)}{\ln \rho} \geq 1$. Note that $x(\rho)$ may not be integer. 

We claim that for every $d \geq d_l$, it holds $x(\gamma(d)) = d$. That is, $x$ is like a reverse function of $\gamma$, but it extends the definitional domain. We prove this claim by contradiction. Suppose it is not. It must hold that $x(\gamma(d)) < d$, and $\frac{(\gamma(d))^{d+1}}{1-\gamma(d)} < T^*$. However, in this case, we can find a $\rho > \gamma(d)$ such that $\frac{\rho^{d+1}}{1-\rho)} \leq T^*$, which contradicts the fact that $\gamma(d)$ is the largest such possible $\rho$. As a result, we must have $x(\gamma(d))=d$.

For $\rho \geq \rho_l$, define $g(\rho)\colon= x(\rho)-\frac{\rho - \rho^{x(\rho)+1}}{1-\rho}$. Then to prove that $h(d)$ is strictly increasing, it is sufficient to prove that $g(\rho)$ is strictly increasing since $h(d) = g(\gamma(d))$. Furthermore, observe that for $\rho \geq \rho_l$, $\frac{\rho^{x(\rho)+1}}{1-\rho} = T^*$, which is a constant. It is enough to prove the function $G(\rho)\colon=\frac{\ln T^* + \ln(1-\rho)}{\ln \rho}-\frac{\rho}{1-\rho}-1$ is strictly increasing.

To prove it, take derivative of $G$, and we can get 
\begin{equation}
\begin{aligned}
G'(\rho) &= \frac{-1}{(1-\rho)\ln \rho}-\frac{\ln T^*+\ln(1-\rho)}{\rho(\ln \rho)^2}-\frac{1}{(1-\rho)^2} \\
&=\frac{-1}{(1-\rho)\ln \rho}-\frac{(x(\rho)+1)\ln \rho}{\rho(\ln \rho)^2}-\frac{1}{(1-\rho)^2} \\
&\geq \frac{-1}{(1-\rho)\ln \rho}-\frac{1}{\rho\ln \rho}-\frac{1}{(1-\rho)^2},
\end{aligned}
\end{equation}
where the last inequality is because $x(\rho)+1\geq 1$ and $\ln(\rho) < 0.$

It remains to show that $G_1(\rho)=\frac{-1}{(1-\rho)\ln \rho}-\frac{1}{\rho\ln \rho}-\frac{1}{(1-\rho)^2}$ is positive in $(0,1)$. Simplifying the terms gives $G_1(\rho)=\frac{-\rho+\rho\ln \rho + 1}{-(1-\rho)^2\rho \ln \rho}$. For the function $G_2(\rho)=-\rho + \rho \ln \rho + 1$, its derivative is $\ln \rho$, which is negative in $(0,1)$. Therefore, the function $G_2(\rho)$ is a strictly decreasing function in $(0,1)$. Since $G_2(1) = 0$, we know both $G_2(\rho)$ is always positive in $(0,1)$, and so is $G_1(\rho)$. Consequently, $G(\rho)$ is a strictly increasing function, which completes the proof.
\end{proof}
We can now finish the proof of Theorem \ref{thm:optimal-algo}.

\begin{proof}
In Algorithm \ref{algo:optimal-parameter}, we first check whether $\frac{\Lambda}{1-\Lambda} \leq T^*$. If it is, it means it is feasible to set $d = 0$, and $\lambda_0 = \Lambda$. It is optimal because $\expectsub{\Lambda,0}{T_r} = 0$.

Otherwise, we have to set $d \geq 1$. Notice that when $d$ is fixed, $\expectsub{\Lambda,0}{T_o}$ is a decreasing function, while $\expectsub{\Lambda,0}{T_r}$ is an increasing function. Therefore, suppose we want to select $d$ one by one. We can first check whether the pair of $(\lambda_0 = \Lambda,d)$ can satisfy the constraint in (\ref{eq:optimization}). If it does, then we decrease the value of $\lambda_0$ to obtain lower averaged rider waiting time. The above discussion claims that feasible $d$ must be at least $D(\frac{1}{\Lambda})$. And for each feasible $d$, the optimal choice is to choose $\lambda_0 = \frac{1}{\gamma(d)}$. The averaged rider waiting time given by this parameter setting is exactly $h(d)$. Now according to Lemma \ref{lemma:monotonicity}, $h(d)$ is an increasing function. Therefore, the optimal solution is to set $d = D(\frac{1}{\Lambda})$, and $\lambda_0 = \frac{1}{\gamma(d)}$.
\end{proof}
\subsection{Proof of Theorem \ref{thm:data-value}}
We now provide the proof of Theorem \ref{thm:data-value}. The intuition is that, 
by Little's Law, it suffices to prove that after using the new policy $(\vec{\tau},d,m)$, the mean number of waiting riders remains the same, but the mean number of waiting orders decreases.

Then the proof starts by coupling the Markov chain of the threshold $m$ and that of the threshold $M$, $M > m$. After that, we propose an optimization problem to minimize the mean number of waiting orders, and use a greedy policy to solve it. Algorithm \ref{algo:policy-improve} corresponds to the final outcome of this policy, and thus makes an improvement.

\subsubsection{System Coupling} 
To enable analysis, recall the discussion in Section \ref{sec:queue-decoupling}, we only need to consider the equivalent Markov chain of $\mathcal{Q}_2$ to calculate the mean numbers of waiting orders and riders. Denote the new Markov chain of $(\vec{\lambda},d,M)$ as $A$, and that of $(\vec{\tau},d,m)$ as $B$. Denote their steady-state distributions by $\pi_A(q), \pi_B(q)$. We can now couple the two Markov chains on state $m$. Recall that $\rho = \frac{1}{\lambda_0}$. By solving balance equations with state $m$ as the initial start, we could notice that
\[
\pi_A(m)\left(\sum_{i=-d}^{M} \rho^{i-m}+\rho^{M-m}\sum_{i=M+1}^{\infty}\prod_{j=1}^{i-M} \frac{1}{\lambda_j}\right) = 1,
\]
and
\[
\pi_B(m)\left(\sum_{i=-d}^{m} \rho^{i-m} + \tau_1\sum_{i=m+1}^{\infty} \left(\frac{1}{\Lambda}\right)^{i-m-1}\right)=1.
\]
Then by the definition of $\tau_1$, it holds that $\pi_A(m) = \pi_B(m)$. As a result, by Lemma \ref{lemma:equivalence}, it leads to the fact that 
\[
\expectsub{\vec{\tau},d,m}{T_r} =\sum_{i=-d}^{-1}(-i)\rho^{i-m}\pi_A(m)= \expectsub{\vec{\lambda},d,M}{T_r}.
\]
\subsubsection{Optimization Framework and Greedy Improvement}
It remains to show that the mean number of waiting orders decreases. For Markov chain $A$, define $a_q = \frac{\pi_A(q)}{\pi_A(m)}$ for $q > 0$. Similarly, define $b_q$ for Markov chain $B$. Indeed, $a_q = \frac{a_{q-1}}{\lambda_{\max(q-M,0)}}$. Then by Lemma \ref{lemma:equivalence}, we have
\[
\expectsub{\vec{\lambda},d,M}{T_o} - \expectsub{\vec{\tau},d,m}{T_o} = \pi_A(m)\sum_{q=1}^{\infty} q\left(a_q - b_q\right).
\]
Furthermore, we can observe that by the definition of $\tau_1$, it holds $\sum_{q=1}^{\infty} a_q = \sum_{q=1}^{\infty} b_q < \infty$. Denote this constant by $H$. To show that $\expectsub{\vec{\lambda},d,M}{T_o} \geq \expectsub{\vec{\tau},d,m}{T_o}$, it suffices to prove $\{b_q\}$ is the solution to the following optimization problem:
\begin{equation}\label{eq:optimize-delay}
\begin{aligned}
& \underset{\vec{x}=x_1,\cdots}{\text{minimize}}
& & \sum_{i=1}^{\infty} i\cdot x_i \\
& \text{subject to}
& & \sum_{i=1}^{\infty} x_i = H \\
& & & \frac{x_i}{x_{i - 1}} \geq \frac{1}{\Lambda}, \forall i > 1 \\
& & & x_1 \geq \frac{1}{\Lambda}.
\end{aligned}
\end{equation}
The requirement of $\frac{x_i}{x_{i - 1}} \geq \frac{1}{\Lambda}$ comes from the fact that the dispatch rate cannot exceed $\Lambda$.

Greedily, given a feasible setting of $\vec{x}$ in \ref{eq:optimize-delay}, the best strategy is to keep $x_i$ large for small $i$, and keep $x_i$ small for large $i$. Particularly, we claim that the solution must satisfy $\frac{x_{i+1}}{x_i} = \frac{1}{\Lambda}$ for all $i \geq 1$. Otherwise, we can find the smallest $i$ with $\frac{x_{i+1}}{x_i} > \frac{1}{\Lambda}$. It always improves the objective by decreasing $x_{i+1}$ and increasing $x_i$. Then we can solve out $x_1$, and it turns out that the produced solution is indeed $\{b_q\}$, which completes the proof.
\hspace*{\fill}~\QED\par\endtrivlist\unskip

\subsection{Proof of Theorem \ref{thm:tradeoff}}
Fix $\lambda_0$ and $d$ satisfying the constraint in (\ref{eq:optimization}). By the assumption in Theorem \ref{thm:tradeoff}, it holds that $\frac{1}{\lambda_0} \geq 0.3$ and $d \geq 3$. Then to show that there is a trade-off between customer experience and labor right, one key observation is that, by Lemma \ref{lemma:meanValue}, 
\begin{equation}
\expect{T_r} + \expect{T_o}-\frac{1}{\mu-1} = d + \frac{2\rho^{d+1}-\rho}{1-\rho},   
\end{equation}
where $\rho = \frac{1}{\lambda_0}$. Let $f(\rho) =  \frac{2\rho^{d+1}-\rho}{1-\rho}$. We then consider two cases.

First, when $\rho^{d+1} \geq \frac{3}{4}$, it holds $f(\rho) \geq \frac{1}{2(1-\rho)}$ because $\rho < 1$. Therefore, $\expect{T_r} + \expect{T_o}-\frac{1}{\mu-1} \geq \frac{1}{2(1-\rho)}.$

Second, when $\rho^{d+1} < \frac{3}{4}$, we claim that $\expect{T_r} + \expect{T_o}-\frac{1}{\mu-1} \geq \frac{d}{2}$. To see why it is true, we only need to show that $f(\rho)=\rho \frac{2\rho^d-1}{1-\rho} \geq -\frac{\rho d}{2}$. For $x \in [0,1)$, Define $u(x) = \frac{2x-1}{1-x^{1/d}}.$ We claim that $u(x) \geq -\frac{d}{2}$. 

\begin{proof}
If not, then it holds that 
\[
\frac{2x-1}{1-x^{1/d}} \leq -\frac{d}{2} \Longleftrightarrow 0 \leq -\frac{d}{2}+\frac{d}{2}x^{1/d}-2x+1.
\]
Define $v(x) = -\frac{d}{2}+\frac{d}{2}x^{1/d}-2x+1$. Take derivative of $v(x)$, we know $v'(x) = \frac{1}{2}x^{1/d-1}-2$. Since $d \geq 3$, $v'(x)$ is a decreasing function. In addition, $\lim_{x \to 0^+} v'(x) > 0$, and $v'(1) < 0$. Therefore, the value $\bar{x}$ such that $v'(\bar{x}) = 0$ is the maximum point of $v(x)$. Set $v'(\bar{x}) = 0$. It holds $\bar{x} = 4^{\frac{-d}{d-1}}$. Therefore, the maximum value of $h(x)$ for $x \in (0,1)$ would be $h(\bar{x}) = -\frac{d}{2}+\frac{d}{2}4^{\frac{1}{d-1}}-2\bar{x}+1 \leq -2\cdot 4^{-1.5}+1<0$. It thus contradicts the assumption that there is some $x$ satisfying $h(x) \geq 0$, which completes the proof.
\end{proof}

As a result, it holds $\expect{T_r}+\expect{T_o}-\frac{1}{\mu - 1} \geq \frac{d}{2} \geq \frac{\ln(4/3)}{-2\ln \rho}$ since $\rho^d \leq \frac{3}{4}$. Furthermore, by assumption, $\rho \geq 0.3$, meaning that $-\ln \rho \leq 2(1-\rho)$. Therefore, summarizing the above discussions, we have
\begin{equation}
\expect{T_r} \geq \frac{\ln(4/3)}{4(1-\rho)}-\left(\expect{T_o}-\frac{1}{\mu-1})\right) \geq \frac{\ln(4/3)}{4(1-\hat{\rho})}-T^*,  
\end{equation}
where $\hat{\rho} = \frac{1}{\Lambda}$.\hspace*{\fill}~\QED\par\endtrivlist\unskip

\end{document}